\documentclass[conference]{IEEEtran}
\IEEEoverridecommandlockouts

\usepackage{graphicx}
\usepackage{balance}
\usepackage[linesnumbered,ruled,vlined]{algorithm2e}
\usepackage{algpseudocode}
\usepackage{color}
\usepackage{subfigure}
\usepackage{cite}
\usepackage{graphicx}
\usepackage{textcomp}
\usepackage{xcolor}
\def\BibTeX{{\rm B\kern-.05em{\sc i\kern-.025em b}\kern-.08emT\kern-.1667em\lower.7ex\hbox{E}\kern-.125emX}}
\usepackage{amssymb}
\usepackage{amsfonts}
\usepackage{amsmath}
\usepackage{ntheorem}
\usepackage[colorlinks,citecolor=blue,urlcolor=blue,bookmarks=true,hypertexnames=true]{hyperref}
\usepackage{cleveref}

\newtheorem{problem}{Problem}

\newtheorem{definition}{Definition}

\newtheorem{lemma}{Lemma}
\newtheorem*{proof}{Proof}
\usepackage[ruled]{algorithm2e}
\usepackage{lipsum}

\usepackage{etoolbox}

%\BeforeBeginEnvironment{figure}{\vskip-1ex}
%\AfterEndEnvironment{figure}{\vskip-1ex}
\begin{document}

\title{On Efficient and Scalable Time-Continuous Spatial Crowdsourcing --- Full Version}%\thanks{Grants or other notes

\author{
\IEEEauthorblockN{Ting Wang\IEEEauthorrefmark{1}\IEEEauthorrefmark{2},
Xike Xie\IEEEauthorrefmark{1}\IEEEauthorrefmark{2},
Xin Cao\IEEEauthorrefmark{3},
Torben Bach Pedersen\IEEEauthorrefmark{4},
Yang Wang\IEEEauthorrefmark{1}\IEEEauthorrefmark{2} and
Mingjun Xiao\IEEEauthorrefmark{1}\IEEEauthorrefmark{2}
\IEEEauthorblockA{\IEEEauthorrefmark{1}School of Computer Science and Technology, University of Science and Technology of China, China}
\IEEEauthorblockA{\IEEEauthorrefmark{2}Suzhou Institute for Advanced Study, University of Science and Technology of China, China }
\IEEEauthorblockA{\IEEEauthorrefmark{3} University of New South Wales, Australia}
\IEEEauthorblockA{\IEEEauthorrefmark{4}  Aalborg University, Denmark }
Email: \{tingwt@mail, xkxie@, angyan@, xiaomj@\}ustc.edu.cn, xin.cao@unsw.edu.au, tbp@cs.aau.dk}
}

\maketitle
\begin{abstract}
The proliferation of advanced mobile terminals opened up a new crowdsourcing avenue, spatial crowdsourcing, to utilize the crowd potential to perform real-world tasks.
In this work, we study a new type of spatial crowdsourcing, called time-continuous spatial crowdsourcing (TCSC {\it in short}).
It supports broad applications for long-term continuous spatial data acquisition, ranging from environmental monitoring to traffic surveillance in citizen science and crowdsourcing projects.
However, due to limited budgets and limited availability of workers in practice, the data collected is often incomplete, incurring data deficiency problem.
To tackle that, in this work, we first propose an entropy-based quality metric, which captures the joint effects of incompletion in data acquisition and the imprecision in data interpolation. Based on that, we investigate quality-aware task assignment methods for both single- and multi-task scenarios. We show the NP-hardness of the single-task case, and design polynomial-time algorithms with guaranteed approximation ratios. We study novel indexing and pruning techniques for further enhancing the performance in practice. Then, we extend the solution to multi-task scenarios and devise a parallel framework for speeding up the process of optimization. We conduct extensive experiments on both real and synthetic datasets to show the effectiveness of our proposals.
%A common way of alleviating the data incompletion problem is to interpolate the unprobed data values for approximation. But the quality of approximation highly depends on the ratios of data completion and the distributions of data acquired.
%In the paper, we study efficient and scalable task assignment algorithms, associated with novel indexing and parallelizing techniques and with quality guarantees.
%Extensive experiments are conducted to support our proposals.
\end{abstract}

%\begin{IEEEkeywords}
%Spatial Crowdsourcing, Time-continuous Crowdsourcing, Quality
%\end{IEEEkeywords}

%\vspace{-5pt}
\section{Introduction}
%\vspace{-5pt}

% no \IEEEPARstart
%spatial crowdsourcing
Spatial crowdsourcing or crowdsensing refers to harnessing human knowledge or sensors of participants' smart phones to retrieve qualitative or quantitative details related to physical locations of crowdsourced tasks.
For conventional spatial crowdsourcing, task assignment and fulfillment are often ``atomic'' in that either they are fully executed or not at all
 \cite{kazemi2012geocrowd,deng2013maximizing,cheng2015reliable,tong17,to2015server,DBLP:conf/icde/ChengLCS17,deng2015task,DBLP:journals/tetc/ZhangYLT19,DBLP:conf/icde/ChengLCS17,DBLP:conf/icde/ChenCZC19,cheng2016task,DBLP:conf/icde/ChengCY19}.
In this work, we consider a special type of spatial crowdsourcing, called {\it time-continuous spatial crowdsourcing} (TCSC {\it in short}).
TCSC is different, due to its temporal continuity, such that a spatial crowdsourced task takes long to finish, necessitating the time-sharing collaboration of multiple workers.
%We call this type of crowdsourcing {\it time-continuous spatial crowdsourcing} (TCSC {\it in short}), which
It finds broad applications in capturing the presence and duration of environmental features, e.g., air/water pollution monitoring\cite{DBLP:conf/iwqos/LiWZ19} and traffic surveillance\cite{DBLP:journals/sensors/ZhangZ0DG19}-\cite{DBLP:journals/wpc/SabirMS19}, which are prevalent in citizen science projects~\cite{citizenscience}.
%\footnote{https://www.nationalgeographic.org/encyclopedia/citizen-science/}.
%\footnote{http://www.gov.cn/gongbao/content/2019/content\_5363080.htm}.
%{https://www.epa.gov/air-research/community-air-monitoring-training}.

For example, in Fig.~\ref{fig:intro}, a crowdsourcer would like to analyze the microbial content in the water for a period.  Upon receiving the task, the TCSC server looks up the records of preregistered workers' spatiotemporal information. Indicating the undertaking of workers giving to the TCSC server, the registered spatiotemporal information consists of workers' available time slots, working regions, and so on, e.g., $\{ worker_1, \langle place~A, 1-2pm\rangle, \langle place~B, 7-8pm\rangle, \dots \}$. The task is then decomposed into a set of subtasks. Each subtask corresponds to a specific time slot and location. Subtasks are assigned to appropriate workers according to the assignment policy. Workers finish the assignment, e.g., probing environmental values, and send their results to the server. The crowdsourced results are aggregated and delivered to the crowdsourcer.

%time continuous task
%In such applications, crowdsourced tasks require continuously monitoring over a specified spatial unit. Air pollution levels are strongly correlated with local weather condition and nearby pollution emissions. air quality monitoring in citizen science.

Quality is essential for such applications. It is infeasible to accomplish a crowdsourced task for all time slots, due to limited budgets and availability of workers. So, the probed data in crowdsourced results is inherently incomplete.
Interpolation (or extrapolation) alleviates the data incompletion problem by inferring unproved values with the probed ones.
However, the interpolation error may further affect the data precision, incurring the so-called data deficiency problem.
Ignorance of the facts of data incompletion and data imprecision would cause unreliable crowdsourced results.
Thus, it is of paramount importance to consider the data quality problem in the TCSC setting. To this end, we propose a general entropy-based metric for summarizing the amount of incompleteness and impreciseness of the crowdsourced results, which enables quality-aware TCSC assignment and balances the plannable expense and observable essence.

\begin{figure}
\centering
\includegraphics[height=1.8in,width = 1\columnwidth]{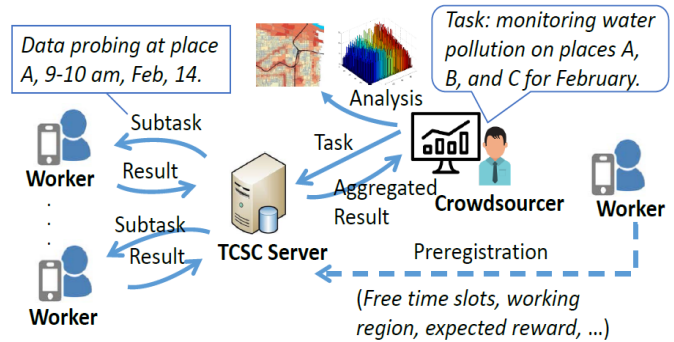}
\caption{General TCSC Framework}
\label{fig:intro}
\end{figure}
%\vspace{3pt}
There is a substantial difference between the TCSC problem and existing task assignment problems in spatial crowdsourcing. For example, existing works mostly focused on maximizing the total number of completed tasks~\cite{kazemi2012geocrowd}, on maximizing the number of performed tasks for an individual worker~\cite{deng2013maximizing}, or on maximizing the reliability-and-diversity score of assignments~\cite{cheng2015reliable}.
These solutions cannot be directly applied, as none of them look into the temporal continuous nature and corresponding quality issues of the TCSC problem.
To our best knowledge, we are the first to study the TCSC problem. %of time-continuous spatial crowdsourcing.

%In our work, we propose a metric for TCSC tasks, which is essentially an entropy function summarizing the amount of incompleteness and impreciseness of the crowdsourced results.

Nevertheless, the computational overhead of quality-aware TCSC assignment is high. Even a simplified version of the problem, i.e., single TCSC task assignment, is NP-hard, as shown in Section~\ref{sec:single}.
In this work, we study how the quality-aware TCSC assignment can be handled in an efficient and scalable way.
%{\color{blue}
For ease of presentation, we start with the simplified version, single-task assignment, with the target of maximizing the task quality under budget constraints.
%}
%For task assignment policies, we consider both single- and multi-task scenarios. The single-task case is with the target of maximizing the task quality under budget constraints.
We prove that the problem is NP-hard and further show that it can be approximated by a polynomial-time solution with guaranteed ratios. We also devise novel pruning and indexing techniques for efficiency enhancement.
Based on that, we introduce the multi-task case, where technical challenges arise in handling the correlations between a given set of single tasks. We devise a parallel framework by distributing multiple correlated tasks to independently running computation cores, so as to maximally utilize the independence between tasks.

The main contributions of this paper are as follows.

\begin{itemize}
\item We propose and formalize the novel TCSC problem.
\item We prove that the problem is NP-hard and therefore study approximation algorithms for accelerating the task assignment with quality guarantees, for both single- and multi-task scenarios.
\item We investigate novel indexing and pruning techniques for the efficiency of the single-task case.
\item We devise an efficient parallelization framework for the multi-task case, by breaking ties of correlated task groups with devised synchronization mechanisms.
\item We conduct extensive experiments on synthetic and real data to evaluate the efficiency and scalability.
\end{itemize}

The rest of this paper is organized as follows.
Section \ref{sec:preliminaries} introduces preliminaries, including concepts, quality metrics, and properties.
Section~\ref{sec:single} starts with the single-task assignment.
Section \ref{sec:multiple} extends the solution to the multi-task scenario.
Section \ref{sec:experiments} evaluates our proposals with extensive experiments.
Section \ref{sec:related} presents related works.
Section \ref{sec:conclusion} concludes the paper.
Table~\ref{tab:notations} summarizes the symbols and notations used.

%\begin{figure}
%\centering
%\includegraphics[height=2.3cm, width=6.5cm]{fig/2.pdf}
%\caption{Different task finishing distribution}
%\label{Example}
%\end{figure}

\newcommand{\tabincell}[2]{
}

\begin{table}[h!]
\centering
\caption{Summary of Notations}
\label{tab:notations}
\begin{tabular}{|c|l|}
\hline
Symbol & Meaning \\ \hline\hline
$\mathcal{T} = \{ \tau_i \}$ & a set of tasks \\
\hline
$\tau_i = \{ \tau_i^{(j)}\}_{j = 1}^m$ & a set of $m$ subtasks of task $\tau_i$\\
\hline
$\tau_i^{(j)}$ & a subtask of $\tau_i$ at $j$-th time slot\\
\hline
$c(\tau)$ and $c(\tau^{(j)})$ & cost of task $\tau$ and subtask $\tau^{(j)}$ \\
\hline
$W = \{ w_i \}_{i = 1}^{n}$ & a set of $n$ workers \\
\hline
$w_i^{(j)}$ & worker $w_i$ at $j$-th time slot \\
\hline
$m$ & number of subtasks of a task \\
\hline
$n$ & number of workers \\
\hline
%$x$ & task assignment matrix\\
%\hline
%$c_{\tau_i,w_j}$ & the traveling cost from $w_j$ to $\tau_i$ \\
%\hline
%$|a,b|_e$ & Euclidean distance between $a$ and $b$ \\
%\hline
$|a,b|_i$ & distance in time between slots  $a$ and $b$\\
\hline
$q(.)$  & quality metric function \\
\hline
$p(.)$ & subtask finishing probability function \\
\hline
$\rho_{err}(.)$  & interpolation error ratio function\\
\hline
\end{tabular}
\end{table}

%%\vspace{-5pt}
\section{Preliminaries}
\label{sec:preliminaries}

%We introduce key concepts in Section~\ref{subsec:basic}. We investigate the quality metric and its properties in Sections~\ref{subsec:quality} and \ref{subsec:property}, respectively.

In this section, we introduce basic concepts, propose the quality metric, and investigate its properties. For ease of presentation, we start with the single-task case (Sections~\ref{sec:preliminaries} and \ref{sec:single}), and extend it to the multi-task case (Section~\ref{sec:multiple}).
%\vspace{-5pt}
\subsection{Basic Concepts}
%\vspace{-5pt}
\label{subsec:basic}

\newtheorem{def1}{Definition}%[section]
%\begin{definition}
%\label{def:task}
{\bf TCSC tasks and subtasks.}
%A spatiotemporal continuous task $\tau$ has its location $\tau_{loc}$ and the task completion time $\tau_t$; we divide the task completion time $\tau_t$ into $m$ time slots in order to make the subtask of a time slot can be completed by a worker. Clearly, the task $\tau$ is divided into $m$ subtasks $\left\{\tau_1, \tau_2, ..., \tau_m \right\}$, and we assume $f(\tau_i) =1$ if the subtask $\tau_i$ is accomplished, otherwise, $f(\tau_i) =0$. We use $\tau_e$ to denote the numbers of executed subtasks of task $\tau$.
A single TCSC task $\tau$ has its location $\tau.loc$ and duration $\tau.dur$.
According to the batch size that tasks arrive in, the duration consists of at most $m$ equal-sized time slots. Thus, $\tau$ can be represented by a set of subtasks $\{\tau^{(j)}\}_{j = 1}^m$ such that each subtask $\tau^{(j)}$ takes $\tau.loc$ as its location, and the corresponding time slot as its duration. Formally, $\tau^{(j)}.loc = \tau.loc$ and $\tau^{(j)}.dur = \frac{\tau.dur}{m}$.
%\end{definition}

%\begin{definition}
%\label{def:worker}
{\bf Worker.} Let $W = \left\{ w_1, w_2, ..., w_n\right\}$ be a set of $n$ workers. Each worker is registered with a set of consecutive states to the SC server, indicating whether s/he is online for providing crowdsourcing services. A worker $w_i$'s temporal state can be represented by $w_i^{(j)}$, indicating the availability of worker $w_i$ at time slot $t_j$.
%In particular, $w_i^{(j)}$ equals 1, if $w_i$ is available for task assignment at time $t_j$; and equals $0$, otherwise.
%Let $W = \left\{ w_1, w_2, ..., w_n\right\}$ be the set of $n$ moving workers, and each worker $w_j$ has a trajectory which is uploaded before. The worker is in different positions at different time slots of a task. we use $\left\{w_j^{(1)}, w_j^{(2)}, ..., w_j^{(m)}\right\}$ to denote the location of the worker $w_j$ at each time slot of the task $\tau$. A worker $w_j^{(i)}$ at time slot $i$ can be in an either online or offline mode, and these two modes are arranged alternately. A worker is online when he is ready to accept tasks.
%\end{definition}
%}}

%\begin{definition}
%(Spatiotemporal Continuous Task Assignment)
{\bf Task Assignment.}
Task assignment is the mapping of workers to subtasks.
%{\color{blue}
In Fig. \ref{fig:assign}, there are $5$ subtasks in $\tau$. At each time slot, there exist a set of workers.
In this example, workers $w_2^{(2)}$ and $w_4^{(4)}$ are assigned to subtasks $\tau^{(2)}$ and $\tau^{(4)}$, respectively.
% Commented by Xike 0826
%%Formally, it can be represented by an $n\times m$ matrix $x$, in which each element is a binary value $x_{i,j}$, representing whether worker $w_i$ is assigned to task $\tau$ at the $j$-th time slot.
%%$x_{i,j}$ equals $1$, if the assignment is made, and $0$ otherwise.
%In reality, the set of subtasks of a task can only be partially assigned, due to limited budgets and availability of workers.
%Also, an assignment is associated with a constraint $\sum_{i \leq n} x_{i, j} \leq 1$, meaning that each subtask can be assigned to at most one worker.
%\end{definition}
$\tau^{(3)}$ is not assigned to any worker as none are available at that time slot. $\tau^{(1)}$ is not mapped to any worker because of cost and budget limits, e.g., workers at time slot $1$ are far away from $\tau^{(1)}$.

\begin{figure}[h]
\centering
\includegraphics[width = 0.42\columnwidth]{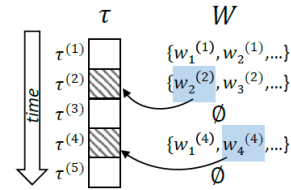}
\caption{An Example of TCSC Task Assignment ($m=5$)}
\label{fig:assign}
\end{figure}

{\bf Cost.} The cost of a subtask $\tau^{(j)}$ is denoted as $c(\tau^{(j)})$. Following the common setting of spatial crowdsourcing, we assume the travel cost of a subtask $c(\tau^{(j)})$ is the Euclidean distance from the location of a subtask $\tau^{(j)}$ and the assigned worker $w$\footnote{If considering traveling distances as costs\cite{tong17}~\cite{todsurvey}, the nearest worker is usually selected in order to minimize the cost of taking a subtask. It is also possible to take the second nearest neighbor or the latter in multi-task scenarios, as covered by Section~\ref{sec:multiple}.}. For simplicity, we assume the unit cost of all workers is the same. Our work is general w.r.t. the type of cost.
The cost of a task is the summation of all its subtasks' costs, i.e., $c(\tau) =\sum^m_{j=1}c(\tau^{(j)})$.

%For an assigned worker-subtask pair $\langle \tau^{(j)},w_i^{(j)} \rangle$, we denote its cost as $c_{i,j}$, or equivalently the distance between $\tau^{(j)}$ and $w_i^{(j)}$. Then, total cost for a task assignment is the summation of travel distances for all subtask-worker pairs. Formally,
%\begin{equation}
%\label{equ:cost}
%C(x) %= \sum_{j=1}^{m}C(x_{\tau^{(j)}})
%= \sum_{j=1}^{m} \sum_{i=1}^{n} x_{i,j} \cdot c_{i,j}
%%x_{\tau^{(j)},w_i^{(j)}} c_{\tau^{(j)},w_i^{(j)}}
%\end{equation}

%\vspace{-10pt}
\subsection{Quality Metric}
\label{subsec:quality}
%\vspace{-5pt}
{Due to the limited budget or the availability of workers, it is necessary to measure the {\it quality} of a TCSC task, given the fact that a task cannot be fully assigned.
There can be two possible states for a subtask, {\it executed} and {\it unexecuted}, corresponding to whether the subtask value is {\it probed} by assigned workers or {\it interpolated} by other probed subtasks. For example, in Fig.~\ref{fig:assign}, $\tau^{(2)}$ and $\tau^{(4)}$ are executed subtasks, while the other $3$ are unexecuted and need to be interpolated.
}
Initially, all subtasks are ``null''. If some subtasks are assigned and probed, their states are changed from ``unexecuted'' to ``executed''. Meanwhile, remaining unexecuted subtasks can be inferred from the executed subtasks by interpolation (or extrapolation).
%There can be three possible states for a subtask, {\it finished}, {\it interpolated}, and {\it null}. All states of subtasks are initialized as ``null''. With some ``null'' subtasks of a TCSC task being assigned, their states transform from ``null'' to ``finished''. Meanwhile, the remaining ``null'' subtasks can be interpolated or extrapolated by ``finished'' items.
So, the quality metric of a TCSC task should, 1) distinguish the quality between a probed item and an interpolated item; 2) be universal for covering the integrative action of different stated subtasks for the overall quality measurement.

{\bf Quality Metric.} For condition 1), we define the concept of {\it subtask finishing probability}, based on the calculation of potential interpolation errors. For condition 2), we utilize an entropy function that returns a real-valued score for conveniently indicating the amount of inaccuracies in accordance with specific assignment strategies, as shown below.
%\vspace{-5pt}
\begin{definition}
({\bf Task Quality.}) Let $\tau$ be a task consisting of $m$ subtasks, $\tau = \{\tau^{(j)}\}_{1 \leq j \leq m}$. Each subtask $\tau^{(j)}$ is associated with a finishing probability $p^{(j)}$. The quality of $\tau$, denoted by $q(\tau)$, is:
%\vspace{-10pt}
\begin{equation}
    \label{eqn:quality}
    q(\tau) = -\sum_{j=1}^{m}p^{(j)}\log_2\big{(} p^{(j)}\big{)}
    \end{equation}
\end{definition}
%\vspace{-7pt}
%Hereby, we propose a universal measure of task finishing quality for TCSC tasks. It is essentially an entropy function that returns a real-valued score for conveniently indicating the amount of inaccuracies in accordance to specific assignment strategies.

%It also enables efficiently quality-aware TCSC task assignment optimization, as will be shown in later sections.
%We first introduce the concept of {\it subtask finish probability} indicating the information degree of accomplishing a subtask. Based on that, we formulate the quality metric.

Next, we introduce the finishing probability $p^{(j)}$ for subtask $\tau^{(j)}$, that serves as building blocks for the quality metric.

{\bf Subtask Finishing Probability.} The finishing probability of a TCSC task equals $1$, if all its subtasks are done, in correspondence to an ideal case that all subtasks are executed.
In practice, for a task consisting of $m$ equally sized  time slots, the finishing probability $p^{(j)}$ for each subtask $\tau^{(j)}$ is at most $\frac{1}{m}$. Without losing generality, we can thus use an error ratio $\rho_{err}$ to measure the amount of information loss caused by interpolation errors.
%\vspace{-5pt}
\begin{align}
\label{eqn:p}
p^{(j)} = \frac{1}{m} (1 - \rho_{err}(\tau^{(j)}))
\end{align}
Accordingly, the probability $p^{(j)}$ equals $0$ for the ``null'' case, representing the zero knowledge about the subtask, and equals $\frac{1}{m}$ for the executed case, representing the total information gain of finishing the subtask.
Next, we show the calculation of error ratio $\rho_{err}$ for an interpolated subtask.
%For in-between values, i.e., $(0, \frac{1}{m})$, they can be taken for unfinished subtasks, indicating the amount of errors by interpolation or extrapolation.
%The overall information gain would be $1$, if all subtasks are finished.

%For unfinished subtasks, the results can be interpolated or extrapolated, if there exist some finished subtasks. Thus, its corresponding finishing probability lies within $[0, \frac{1}{m}]$.

{\bf Interpolation Error Ratio.}
A common way of inferring a missing value from a discrete set of known values is known as inverse distance interpolation \cite{gao2006voting}\cite{gao2006adaptive}\cite{babak2009statistical}, which averages the values of its $k$ nearest neighbors, i.e., $k$ nearest subtasks on the timeline of a TCSC task.%\footnote{We argue that our proposal can be general to other interpolation techniques, e.g., kriging, with slight modification to the information probabilities functions.}.

Intuitively, the error ratio is proportional to the distances between the interpolated value and its neighboring values~\cite{gao2006voting}\cite{gao2006adaptive}.
{
Between two subtasks $\tau^{(i)}$ and $\tau^{(j)}$,
the temporal distance is denoted as $|\tau^{(i)}, \tau^{(j)}|_i$, referring to the absolute difference of $\tau^{(i)}$ and $\tau^{(j)}$'s timestamps.
For example, in Fig.~\ref{fig:assign}, we have $|\tau^{(1)},\tau^{(2)}|=1$ and $|\tau^{(2)},\tau^{(4)}|=2$.
Then, the interpolation error of an unexecuted subtask can be evaluated by the distances from a set of executed subtasks.
%distance between $\tau^{(1)}$ and $\tau^{(2)}$ is $1$, and the distance between $\tau^{(1)}$ and $\tau^{(2)}$ is $1$
 %be a subtask that is to be inferred. $|\tau^{(j)},e|_i$ denotes the distance in time between subtasks $\tau^{(j)}$ and $e$.
Assume function $S_{kNN}(.)$ returns the set of $k$ executed subtasks with the smallest distances.
An unexecuted subtask $\tau^{(j)}$ can thus be interpolated by $S_{kNN}(\tau^{(j)})$, with the interpolation error measured by the error ratio function $\rho_{err}$ as follows.
}
%The equation of $\rho_{err}$ is as follows.

%%Commented by Xike 0826
%Let $e$ be an executed subtask and $\tau^{(j)}$ be a subtask that is to be inferred. $|\tau^{(j)},e|_i$ denotes the distance in time between subtasks $\tau^{(j)}$ and $e$. Function $S_{kNN}(.)$ returns the set of $k$ executed subtasks with the smallest distances. The equation of $\rho_{err}$ is as follows.

%Taking a numerical value interpolation as an example, the missing value can be inferred by averaging the values of $k$ nearest subtasks on the timeline of the task.
%Without losing generality, the finishing probability $p^{(j)}$ is formalized as:
%\vspace{-5pt}
\begin{align}
\label{eqn:rho}
\footnotesize
\rho_{err}(\tau^{(j)}) = \frac{\sum_{e \in S_{kNN}(\tau^{(j)})} |\tau^{(j)},e|_i}{k\cdot m}
\end{align}

%\begin{align}
%\label{eqn:rho}
%\footnotesize
%%\nonumber
%\rho_{err}(\tau^{(j)})\! =\!
%    \left\{
%        \begin{array}{l}
%        \! \frac{1}{k\cdot m}\sum_{e \in S_{kNN}(\tau^{(j)})}|\tau^{(j)},e|_i, ~~ if~|S_{kNN}(\tau^{(j)})| \geq k \\
%         \!(1-\frac{|S_{kNN}(\tau^{(j)})|}{k}) \! +\! \frac{1}{m}\sum_{e \in S_{kNN}(\tau^{(j)})}|\tau^{(j)},e|_i
%        \end{array}
%    \right.
%\end{align}

%The error ratio is thus proportional to the temporal distances between the interpolated subtask and neighboring subtasks, as shown below.
%Here, $|\tau^{(j)},e|_i$ denotes interpolation distance between subtask $\tau^{(j)}$ and one of its neighboring (finished) subtasks, $e$.
%The error would increase, if distance $|\tau^{(j)},e|_i$ increases. This is consistent with the fact that the error is proportional to the interpolation distances \cite{babak2009statistical}.

The value range of the error ratio $\rho_{err}$ is from $0$ to $1$. The error ratio equals $100\%$, if none of the subtasks of a task is executed. Accordingly, a lower error ratio value is achieved, if the target subtask has more proximate executed subtasks (i.e., smaller interpolation distances). {For example, in Fig.~\ref{fig:assign}, $\tau^{(1)}$'s $2$-NN results are $\tau^{(2)}$ and $\tau^{(4)}$, whose distances from $\tau^{(1)}$ are $1$ and $3$, respectively.} So, $\rho_{err}(\tau^{(1)})$ can be calculated as $\frac{1+3}{2*100} = 0.02$. Since $\tau^{(2)}$ is an executed subtask, its error ratio equals zero~\footnote{
It is possible that a subtask $\tau^{(j)}$ has less than $k$ nearest neighbors, e.g., at the starting stage of subtask assignment.
If $|S_{kNN}| < k$, we let $|\tau^{(j)},e|_i$ be $m$, indicating the largest possible interpolation distance.}.
Similarly, we can derive the error ratios and the subtask finishing probabilities for all the subtasks. Based on that, we can calculate the task quality, by Equation~\ref{eqn:quality}.

In summary, the concept of entropy is adopted to quantify the quality of crowdsourced results. The value of $q(\tau)$ ranges from $0$, i.e., the lowest information degrees (none of the subtasks are executed), to $log_2m$, i.e., the highest information degree (all subtasks are executed).

{\bf Extension for Worker Reliability.}
Our quality metric, i.e., Equation~\ref{eqn:quality}, is general in addressing the reliability issues~\cite{cheng2015reliable}, where workers are not assumed to be entirely trustable. Instead, each worker $w_i$ is assumed to have a reliability score, represented by $\lambda_i \in [0, 1]$.
Incorporating the reliability, the subtask finishing probability can be defined by both worker confidence and potential interpolation error. In particular, given subtask $e$, the finishing probability of an assigned task is $\frac{\lambda_e}{m}$, where $\lambda_e$ is the reliability of the worker assigned to subtask $e$.

Due to the incorporation of worker reliability, for an interpolated subtask $\tau^{(j)}$, the maximum finishing probability is no longer $\frac{1}{m}$, but the product of $\frac{1}{m}$ and the average of $\{\lambda_e\}_{e \in S_{kNN}(\tau^{(j)})}$. The extended form of subtask probability is as Equation~\ref{eqn:p2}.
\begin{align}
\label{eqn:p2}
p^{(j)} = \frac{1}{m} \cdot [\frac{\sum_{e \in S_{kNN}(\tau^{(j)})} \lambda_e}{k}  -
\rho_{err}(\tau^{(j)})]
%   (1- \rho_{err}(\tau^{(j)}, e)
\end{align}
Accordingly, the error ratio is determined by the summation of interpolation distances weighted by the worker reliability $\lambda_e$ and then divided by $k\cdot m$, similar to Equation~\ref{eqn:rho2}. The extended form of error ratio is as Equation~\ref{eqn:rho2}.
\begin{align}
\label{eqn:rho2}
\footnotesize
\rho_{err}(\tau^{(j)}) = \frac{\sum_{e \in S_{kNN}(\tau^{(j)})} \lambda_e \cdot |\tau^{(j)},e|_i}{k\cdot m}
\end{align}
Notice that probability $p^{(j)}$ equals $0$, if $S_{kNN}(\tau^{(j)})$ is an empty set, meaning that null of subtasks are currently done.
The probability $p^{(j)}$ is $\frac{\lambda_{\tau^{(j)}}}{m}$, if $\tau^{(j)}$ is an executed subtask.
If $\tau^{(j)}$ is interpolated, and the reliability of each worker for executing $S_{kNN}(\tau^{(j)})$ equals $1$, Equation~\ref{eqn:rho2} degenerates into Equation~\ref{eqn:rho}.

{\bf Summary.} We have shown the effectiveness of the metric in capturing the quality of various scenarios. In the sequel, we show two important properties of the metric, submodularity and non-decreasingness, enabling efficient task assignment algorithms with approximation guarantees.
%\vspace{-5pt}
\subsection{Properties}
\label{subsec:property}

%\subsubsection{Submodularity}
%Khuller et.al. \cite{article} proved a $(1-1/\sqrt{e})$ approximation guarantee for a modified Greedy algorithm under nonnegative submodular functions. In order to use the approximation algorithm,
We derive the properties of the task quality metric. We first recall the definition of submodular  functions.
%\vspace{-5pt}
\begin{definition}
\label{submodular}
({\bf Submodular Function \cite{cover2012elements}.}) Let $S$ be a finite set, and $2^S$ be the power set of $S$. A submodular function is a set function $f: 2^S \rightarrow \mathbb{R} $, if for any $X,Y\subset S$ satisfying $f(X\cap Y)+f(X\cup Y) \leq f(X)+f(Y)$.
\end{definition}
%\vspace{-5pt}
%Recall that we divide the task into $m$ subtasks, $S = \left\{ \tau^{(1)}1, \tau^{(2)},..., \tau^{(m)}\right\}$, so $S$ can be treated as a finite set, and our problem is to select a set of subtasks $X$ from $S$ to execute such that the task quality is maximized under a fixed budget, $X \subseteq S$. Here and in the following, we write $q(X)$ to denote the task finishing quality when executing the subtasks in the set $X$. As quality is similar to entropy \cite{cover2012elements}, it also satisfies the properties of entropy.

\begin{lemma}
\label{lem:double}
({\bf Composite Submodular Functions \cite{Lin}}.)
Assume a set $V$ and a function $h: 2^V \rightarrow \mathbb{R}$ that returns a real value for any subset $S \subseteq V$.
If $h$ is a non-decreasing submodular function and $\phi$ is a non-decreasing concave function $\phi: \mathbb{R} \rightarrow \mathbb{R}$, then function $\phi(h(S))$ is non-decreasing and submodular.
\end{lemma}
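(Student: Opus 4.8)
The plan is to treat the two conclusions separately. Monotonicity of $\phi \circ h$ is immediate: for $X \subseteq Y$, non-decreasingness of $h$ gives $h(X) \le h(Y)$, and then non-decreasingness of $\phi$ gives $\phi(h(X)) \le \phi(h(Y))$. The substance is submodularity, which by Definition~\ref{submodular} amounts to showing, for arbitrary $X, Y \subseteq V$, that $\phi(h(X \cap Y)) + \phi(h(X \cup Y)) \le \phi(h(X)) + \phi(h(Y))$. To lighten notation I would abbreviate $a = h(X\cap Y)$, $b = h(X)$, $c = h(Y)$, $d = h(X\cup Y)$, so the goal becomes $\phi(a) + \phi(d) \le \phi(b) + \phi(c)$, equivalently $\phi(d) - \phi(c) \le \phi(b) - \phi(a)$.

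The key is to record what the hypotheses on $h$ say about these four numbers. Monotonicity of $h$ (from $X\cap Y \subseteq Y \subseteq X \cup Y$) yields $a \le c \le d$, while submodularity of $h$ yields $a + d \le b + c$, i.e.\ $d - c \le b - a$. Writing $t = d - c \ge 0$, I would chain two elementary consequences of the assumptions on $\phi$. First, the \emph{diminishing-increment} property of a concave function --- for $x \le y$ and $t \ge 0$, $\phi(y+t) - \phi(y) \le \phi(x+t) - \phi(x)$, a standard consequence of concavity (the chord slope of $\phi$ is non-increasing) --- applied with $x = a \le c = y$ gives $\phi(d) - \phi(c) = \phi(c+t) - \phi(c) \le \phi(a+t) - \phi(a)$. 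Second, since $t = d - c \le b - a$ we have $a + t \le b$, so non-decreasingness of $\phi$ gives $\phi(a+t) \le \phi(b)$ and hence $\phi(a+t) - \phi(a) \le \phi(b) - \phi(a)$. Combining the two inequalities yields $\phi(d) - \phi(c) \le \phi(b) - \phi(a)$, which is exactly what is wanted.

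The main obstacle is conceptual rather than computational: both the base points and the increment lengths of the two differences $\phi(d) - \phi(c)$ and $\phi(b) - \phi(a)$ differ, so neither concavity nor monotonicity of $\phi$ suffices alone. The argument above resolves this by using each property for exactly one of the two mismatches --- concavity to slide the base point from $c$ down to $a$ (here monotonicity of $h$, via $a \le c$, is essential to know the slide moves in the correct direction), and monotonicity of $\phi$ to absorb the leftover increment gap $b - (a+t) \ge 0$ guaranteed by submodularity of $h$. I would make sure to state and justify the diminishing-increment inequality explicitly, since it silently carries most of the weight. An alternative route is to invoke the marginal (diminishing-returns) characterization of submodularity directly; it is equivalent and requires the same two properties of $\phi$ in the same roles.
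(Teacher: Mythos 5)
Your proof is correct, but there is nothing in the paper to compare it against: the paper does not prove Lemma~\ref{lem:double} at all --- it is quoted as a known result from \cite{Lin} and used as a black box (e.g., in the proofs of Lemma~\ref{lem:q}, Lemma 4, and Lemma~\ref{lem:qmin}). What you have written is a sound, self-contained proof of that cited fact. The two-step structure is exactly right: with $a = h(X\cap Y)$, $b = h(X)$, $c = h(Y)$, $d = h(X\cup Y)$, monotonicity of $h$ gives $a \le c \le d$, so $t = d-c \ge 0$ and the diminishing-increment inequality $\phi(c+t)-\phi(c) \le \phi(a+t)-\phi(a)$ (which does follow from concavity: writing $c$ and $a+t$ as convex combinations of $a$ and $d$ and adding the two concavity inequalities gives $\phi(c)+\phi(a+t) \ge \phi(a)+\phi(d)$) handles the base-point mismatch; submodularity of $h$ gives $a+t \le b$, and monotonicity of $\phi$ absorbs the leftover increment. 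Your closing observation that neither property of $\phi$ suffices alone is also worth keeping: $\phi(z) = -z$ is concave but not non-decreasing, and composing it with a strictly submodular $h$ yields a strictly supermodular function, so monotonicity of $\phi$ is genuinely essential and your proof correctly isolates where it enters. The only caveat is bookkeeping, not mathematics: the paper's Definition~\ref{submodular} states submodularity in the intersection/union form you use, so your argument matches the paper's conventions directly.
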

%\vspace{-5pt}
%\begin{lemma}
%\label{lem:submodular}
%The function $p^{j}$ is non-decreasing and submodular.
%\end{lemma}
%\vspace{-5pt}
\begin{lemma}
\label{lem:q}
Function $q(\tau)$ is non-decreasing and submodular.
\end{lemma}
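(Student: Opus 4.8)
The plan is to view $q(\tau)$ as a set function over the ground set of all $m$ subtasks, where a selected set $S$ corresponds to the executed (probed) subtasks, and then to establish non-decreasingness and submodularity compositionally. First I would rewrite the per-subtask finishing probability as an affine function of a $k$-nearest-neighbor distance sum. Writing $D_j(S) = \sum_{e \in S_{kNN}(\tau^{(j)})} |\tau^{(j)}, e|_i$, with the convention from the footnote that each of the $k$ missing neighbors contributes distance $m$, Equation~\ref{eqn:p} becomes $p^{(j)}(S) = \frac{1}{m} - \frac{1}{km^2} D_j(S) = \frac{1}{km^2}\big(km - D_j(S)\big)$. Hence it suffices to analyze the benefit function $B_j(S) = km - D_j(S)$, because $p^{(j)}$ is merely a positive rescaling of it.

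The core step is to show each $B_j$ is non-decreasing and submodular. For monotonicity, observe that every real temporal distance is at most $m-1 < m$, so adding a newly executed subtask can only replace a farther (possibly virtual) neighbor by a closer one; thus $D_j$ never increases and $B_j$ never decreases. For submodularity I would use the diminishing-returns characterization equivalent to Definition~\ref{submodular}: for $S \subseteq T$ and a subtask $e \notin T$, the marginal gain of adding $e$ equals $\max\big(0,\, d_{(k)}(\cdot) - |\tau^{(j)}, e|_i\big)$, where $d_{(k)}(\cdot)$ denotes the $k$-th smallest neighbor distance of the current executed set (equal to $m$ whenever fewer than $k$ neighbors are present). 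Because enlarging the executed set can only shrink the $k$-th smallest distance, $d_{(k)}(T) \le d_{(k)}(S)$, so the marginal gain with respect to $T$ is at most the marginal gain with respect to $S$. This is precisely the diminishing-returns property, giving submodularity of $B_j$, and therefore of $p^{(j)}$.

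Finally I would invoke Lemma~\ref{lem:double}. The entropy term $\phi(x) = -x\log_2 x$ satisfies $\phi''(x) = -\tfrac{1}{x\ln 2} < 0$, so it is concave, while $\phi'(x) = -\tfrac{1}{\ln 2}(\ln x + 1) \ge 0$ holds precisely on $(0, 1/e]$. Since $p^{(j)}(S) \le \tfrac{1}{m} \le \tfrac{1}{e}$ in the regime of interest, $\phi$ is both non-decreasing and concave on the entire range of $p^{(j)}$. Lemma~\ref{lem:double} then yields that each composite $\phi(p^{(j)})$ is non-decreasing and submodular, and since a sum of non-decreasing submodular functions is again non-decreasing and submodular, $q(\tau) = \sum_{j=1}^m \phi(p^{(j)})$ inherits both properties.

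The main obstacle I anticipate is the submodularity of the $k$-NN benefit $B_j$: one must argue carefully about how the $k$-th nearest distance evolves as the executed set grows, including the boundary bookkeeping of the virtual distance-$m$ neighbors when fewer than $k$ subtasks have been executed. A secondary technical point is confirming the domain restriction $p^{(j)} \le 1/e$, which ensures $\phi$ is non-decreasing on the relevant interval and which holds because $p^{(j)} \le 1/m$.
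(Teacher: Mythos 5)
Your proof is correct, and its outer skeleton coincides with the paper's: both establish that the finishing-probability set function $p^{(j)}$ is non-decreasing and submodular, then invoke Lemma~\ref{lem:double} to compose it with the concave entropy term, and finally sum over the $m$ subtasks. Where you genuinely diverge is in the proof of the key fact about $p^{(j)}$. The paper (Lemmas 7 and 8 in the Appendix) verifies the inequality $\rho_{err}^{S\cap\{e\}}+\rho_{err}^{S\cup\{e\}} \geq \rho_{err}^{S}+\rho_{err}^{\{e\}}$ by an exhaustive case analysis on $|S|$ relative to $k$ (empty, fewer than $k$, at least $k$, with subcases on whether the $k$-NN set changes); note that this checks Definition~\ref{submodular} only for pairs in which one of the two sets is the singleton $\{e\}$, i.e., it compares the marginal gain of $e$ on $S$ against its gain on $\emptyset$. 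You instead rewrite $p^{(j)}$ as a positive affine image of $B_j(S)=km-D_j(S)$, observe that the marginal gain of adding $e$ is exactly $\max\bigl(0,\, d_{(k)}(S)-|\tau^{(j)},e|_i\bigr)$, and use the monotone shrinkage of the $k$-th nearest distance to obtain diminishing returns for \emph{arbitrary} nested pairs $S\subseteq T$ --- the characterization that is actually equivalent to submodularity. Your route is therefore not only cleaner (one marginal-gain formula, with the virtual distance-$m$ neighbors absorbing the $|S|<k$ bookkeeping) but also formally stronger than what the paper's appendix verifies. A second point in your favor: you notice that $\phi(x)=-x\log_2 x$ is non-decreasing only on $(0,1/e]$ and justify the application of Lemma~\ref{lem:double} via the bound $p^{(j)}\le 1/m\le 1/e$; the paper simply asserts that ``the entropy function is known as non-decreasing and concave,'' which fails on all of $[0,1]$ and is rescued precisely by the domain restriction you supply (valid for $m\ge 3$, which covers the regime of interest). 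What the paper's case analysis buys in exchange is only explicitness about the boundary conventions; nothing essential is lost in your formulation.
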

%\vspace{-10pt}
\begin{proof}
The quality metric function $q(.)$ is a composite function of the entropy function and the finishing probability function, i.e., $p(.)$.
It can be proved that function p(.) is submodular and non-decreasing, as proved by Lemmas 7 and 8 in Appendix.
%\footnote{Available at \url{http://staff.ustc.edu.cn/~xkxie/report/tech-icde21.pdf} \label{appendix}}. 
More, the entropy function is known as non-decreasing and concave. According to Lemma~\ref{lem:double}, the quality metric function $q(.)$ is non-decreasing and submodular w.r.t. $\tau$.
\end{proof}
The submodularity property captures the effect of marginal benefits decreasing in the measuring the TCSC quality. As the number of workers on finishing a task is increasing, the marginal value of adding a new worker is decreasing.
Also, the submodular property of the quality metric enables efficient optimization, e.g., submodular function maximization, where constant factor approximation algorithms are often available~\cite{article}. In the sequel, we discuss how the property is utilized and generalized for optimization problems in the TCSC task assignment applications.

%\begin{lemma}
%\label{the:q_mon}
%The submodular function $q(\tau)$ is monotonous and nonnegative.
%\end{lemma}
%
%\begin{proof}
%Let $S$ a sets of subtasks in $\tau$, and $e$ be a subtask of $\tau$ satisfying $e \notin S$. If we can show that the quality increases with executing $e$ based on $S$, i.e.,
%$q(S \cup e)  \geq q(S)$, then the function $q(\tau)$ is monotonous.
%
%Let us consider the quality change upon executing $e$ based on the execution of $S$. If none of the subtasks in $S$ is executed, the finishing probabilities change from zero to non-zero values, so that the quality value must increase.
%
%There are two possible states for subtasks of $S$, ``finished'' and ``interpolated''.
%
%According to Equation~\ref{eqn:p},
%
%According to Equation~\ref{eqn:p},
%
%Clearly, the number of the executed subtasks of $S2$ is larger than $S1$. According to the Formula \ref{equ:pj} from Section 3.2, the contribution rate of the executed subtask is greater than the unexecuted subtask. When the task continues to execute new subtask, the unexecuted subtask is either executed or the interpolation error of the subtask is reduced, both possibilities will increase the task finishing quality. Therefore, $q(S1) \leq q(S2)$, the submodular function $q(\tau)$ is monotonous.
%
%From the Formula \ref{equ:pj}, we can get the finishing probability $p^{(j)}$ within the range of $0$ and $\frac{1}{m}$, then $\log[ p^{(j)}] \leq 0$. Therefore, function $q(\tau) = -\sum_{j=1}^{m}p^{(j)}\log[ p^{(j)}] \geq 0$, and the submodular function $q(\tau)$ is nonnegative.
%
%\end{proof}

%%\vspace{-5pt}
\section{SINGLE TASK ASSIGNMENT}
\label{sec:single}
%%\vspace{-5pt}
We proceed to study the single TCSC task assignment problem.
In particular, we consider maximizing the quality of a single task with budget limits.
We prove that the problem is NP-hard.
For efficient task assignment, we devise heuristic algorithms that approximate the optimization targets in polynomial time with quality guarantees.
Nevertheless, we show that the polynomial solution incurs overheads unaffordable for the real-time task assignment scenarios.
Therefore, we propose novel indexing and pruning techniques to further enhance the task assignment efficiency.

We formalize and analyze the problem in Section~\ref{subsec:pdef}. We devise the approximation algorithm in Section~\ref{sec:single_alg}. We study efficient indexing and pruning techniques in Section~\ref{subsec:eff}.
%\vspace{-5pt}
\subsection{Problem Definition and Analysis}
%\vspace{-5pt}
\label{subsec:pdef}

\begin{problem}
{\bf Single-task Quality Maximization with Fixed Budgets (sQM {\it in short}).}
Given a TCSC task $\tau$ and a fixed budget $b$, the sQM problem is to find an assignment for $\tau$, such that the quality $q(\tau)$ is maximized, and the cost $c(\tau)$ does not violate the given budget $b$.
\label{6}
\begin{align}
%\vspace{-5pt}
    Maximize~~ & q(\tau)  \nonumber \\
    %subject~~to~~ &\sum_{j=1}^{m} \sum_{i=1}^{n} x_{i, j}\cdot c_{i, j} \leq b  \nonumber
    subject~~to~~ &\sum_{j=1}^{m} c(\tau^{(j)}) \leq b \nonumber
 \end{align}

% \begin{align}
%    Minimize~~ &\sum_{j=1}^{m} c(\tau^{(j)})  \nonumber  \\
    %subject~~to~~ &\sum_{j=1}^{m} \sum_{i=1}^{n} x_{i, j}\cdot c_{i, j} \leq b  \nonumber
%    subject~~to~~ & q(\tau) \geq Q \nonumber
% \end{align}

\end{problem}
%\vspace{-5pt}
To solve the sQM problem is equivalent to evaluate a task assignment matrix yielding the maximum quality.
To give a sense of the size of the solution space, we assume a given set of $n$ workers and a task consisting of $m$ subtasks/timeslots. Thus, there could be an exponential number of possible worker-and-subtask assignment pairs $O(m^n)$.
Then, we prove that the $sQM$ problem is NP-hard, by Lemma~\ref{lem:np}.
%reducing it from another well-known NP-hard problem, $0$-$1$ Knapsack problem.

%\begin{definition}
%({\bf The 0-1 Knapsack Problem}~\cite{vazirani2013approximation}.) Given a maximum weight capacity $W$, and a set $C$ of objects, $C = \{a_i\}_{i=1}^{n}$, of which each object is with a weight $ws_i$ and a value $v_i$, the 0-1 Knapsack problem is to find a subset $C'$ ( $C' \subseteq C$) that maximizes $\sum_{a_i\in C'}v_i$ and satisfies $\sum_{a_i\in C'}ws_i \leq W$.
%\end{definition}
%\vspace{-5pt}
\begin{lemma}
\label{lem:np}
    The $sQM$ problem is NP-hard.
\end{lemma}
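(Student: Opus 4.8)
The plan is to establish NP-hardness by a polynomial-time reduction from a known NP-hard problem. Given the structure of sQM --- we are selecting a subset of subtask-to-worker assignments subject to an additive budget constraint $\sum_j c(\tau^{(j)}) \leq b$ while maximizing the submodular, non-decreasing objective $q(\tau)$ (by Lemma~\ref{lem:q}) --- the natural source problems are the classical \emph{Knapsack} problem or \emph{Set Cover} / \emph{Budgeted Maximum Coverage}. Because each subtask has an individual travel cost and contributes to the entropy-based quality, and because quality exhibits diminishing returns through the $k$NN interpolation coupling, I would reduce from a budget-constrained combinatorial selection problem whose decision version is known to be NP-complete. My first choice would be a reduction from \textbf{Knapsack}, since sQM is literally ``pick items (assignments) of given costs to maximize a value under a weight bound,'' and the additive budget matches Knapsack's weight constraint exactly.

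Concretely, I would proceed as follows. First, I would state the decision version of sQM: given $\tau$, budget $b$, and threshold $Q$, does there exist a feasible assignment with $c(\tau)\le b$ and $q(\tau)\ge Q$? I would verify this decision problem lies in NP, since a candidate assignment matrix can be checked in polynomial time --- computing each $\rho_{err}(\tau^{(j)})$ via the $k$NN distances (Equation~\ref{eqn:rho}), then each $p^{(j)}$ (Equation~\ref{eqn:p}), then evaluating the entropy sum (Equation~\ref{eqn:quality}) is all polynomial in $m$ and $n$. Second, I would construct the reduction: from a Knapsack instance with items of weights $w_\ell$ and profits $v_\ell$ and capacity $B$, I would build a TCSC instance where each item maps to a candidate subtask assignment whose cost encodes $w_\ell$ and whose contribution to $q(\tau)$ encodes $v_\ell$, setting the budget $b=B$ and the quality threshold $Q$ from the target profit. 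Third, I would prove the equivalence: a Knapsack solution of profit $\ge V$ within capacity $B$ exists if and only if a feasible sQM assignment of quality $\ge Q$ exists, arguing both directions of the correspondence.

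The main obstacle --- and the step deserving the most care --- is that the quality function $q(\tau)$ is \emph{not} additive across subtasks: executing one subtask changes the $S_{kNN}$ sets and hence the interpolation error ratios of all the unexecuted subtasks, so the marginal quality gain of an assignment depends on which other subtasks are already executed. This coupling is precisely the submodularity noted in Lemma~\ref{lem:q}, and it breaks a naive one-to-one value mapping to Knapsack profits. To handle this, I would engineer the gadget so that the interpolation coupling is neutralized: for instance, by spacing the candidate subtasks far apart on the timeline (or using the $|S_{kNN}|<k$ fallback that sets distances to $m$) so each selected assignment contributes an essentially fixed, decoupled quantum of quality, making the total quality a controllable monotone function of the selected set. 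Alternatively, if decoupling proves awkward, I would instead reduce from \textbf{Set Cover} or \textbf{Budgeted Maximum Coverage}, which align more naturally with a submodular coverage-style objective and whose NP-completeness is standard; there the sets correspond to workers' coverage of time slots and the budget corresponds to the number or cost of workers chosen. Either route yields a polynomial-time reduction, and combined with membership in NP establishes that sQM is NP-hard.
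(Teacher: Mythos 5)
Your template (polynomial reduction from a known NP-hard budgeted selection problem) is the standard one, and you correctly diagnose the central obstacle: the quality $q(\tau)$ is not additive, and the marginal gain of executing a slot is dictated by the entropy/$k$NN-interpolation arithmetic rather than being a free parameter you can set to a Knapsack profit $v_\ell$. The genuine gap is that your proposed remedy destroys the reduction. If you ``neutralize'' the interpolation coupling by spacing candidate slots apart (or exploiting the $|S_{kNN}|<k$ fallback) so that each selected assignment contributes an essentially fixed quantum of quality, then all profits become \emph{equal}, and the sQM instances your reduction produces amount to ``maximize the number of executed slots subject to $\sum_j c(\tau^{(j)})\le b$,'' which is solved exactly in polynomial time by greedily taking the cheapest slots. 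A reduction whose image consists only of polynomially solvable instances cannot establish NP-hardness; to encode general (distinct) Knapsack profits you would have to exploit precisely the coupling you are suppressing, and you supply no gadget that does so. Your fallback via Set Cover / Budgeted Maximum Coverage fails for the same structural reason: in sQM the ``coverage'' relation is geometric --- an executed slot only benefits slots for which it is among the $k$ nearest executed slots on a one-dimensional timeline --- so it induces interval-like neighborhoods on a line, and arbitrary set systems cannot be embedded in that geometry (coverage problems restricted to a line are typically polynomial).

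For contrast, the paper sidesteps gadget engineering entirely: it restricts sQM to unit costs, under which the budget constraint becomes a cardinality constraint, and then invokes the known NP-hardness of maximizing a non-decreasing submodular function subject to a cardinality constraint (Nemhauser et al.), combined with the submodularity of $q$ from Lemma~\ref{lem:q}, to conclude that this special case --- and hence sQM --- is NP-hard. That restriction-plus-citation argument is much lighter than a from-scratch reduction (one may still ask whether the cited hard family of submodular functions is expressible as instances of $q$, but that is the paper's burden, not yours). As written, your proof attempt does not go through: the step that must carry the hardness --- the gadget mapping profits into quality --- is exactly the step your construction collapses.
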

%\vspace{-5pt}
%\begin{lemma}
%    The $sQM$ problem is NP-hard.
%\end{lemma}
%\vspace{-5pt}
\begin{proof}
It is well known that maximizing a submodular function under a cardinality constraint (i.e., selecting at most $k$ elements) is NP-hard~\cite{Nemhauser1978}. If we consider the workers have unit cost, the budget constraint in our problem becomes the cardinality constraint. It has been proved in Lemma~\ref{lem:q} that the quality function $q(\tau)$ (i.e., the objective function to be maximized) is submodular, and thus a special case of our $sQM$ problem is also NP-hard.
\end{proof}
\subsection{Approximation Algorithms}
\label{sec:single_alg}

\newcommand{\Tau}{\mathrm{\it T}}

%It is not possible to find a polynomial time solution for the optimal assignment, since the $sMQ$ problem is NP-hard.
Hereby, we provide a suboptimal solution with guaranteed approximation ratios, based on the submodular property of the quality metric, as discussed in Section~\ref{subsec:property}.
Since the problem is of budgeted maximization for a submodular function, we can have a heuristic algorithm, which repeatedly selects an element (e.g. a subtask) that maximizes the quality increment until the budget is exceeded. The process is detailed in Algorithm~\ref{alg:single}.

\begin{algorithm}
\label{alg:single}
\caption{Single task assignment algorithm} %of single task assignment
\small
%\begin{algorithmic}[1]
\LinesNumbered
\KwData{$b>0$, a set of workers $W$, a task $\tau$}
Initialize the states of subtasks $\{\tau^{(j)} \in \tau \}$ as {\tt NULL}  and initialize $T_{cur}'$ and $T_{cur}$ as two empty sets\;
For each subtask $\tau^{(j)}$ get the corresponding cost     $c(\tau^{(j)})$\;
Execute the subtask $\tau^{(h)}$ yielding the highest quality but not exceeding the budget, $T_{cur}' \leftarrow \{ \tau^{(h)} \}$\;
\While{$C(\tau) \leq b$}%$\tau \neq \emptyset $ and
%\While{$ q(\tau) \leq Q$}
{
    \For{ $ \tau^{(j)} \in \tau - T_{cur}$ }
    {
            Compute $\frac{q(\Tau_{cur} \cup  \tau^{(j)}) - q(\Tau_{cur} )}{c(\tau^{(j)})}$\;
            %$ \frac{\Delta q(\taux_{\tau^{(j)}})}{C(x_{\tau^{(j)}}) } $\;
    }
    $\tau^{(*)} \leftarrow argmax\left\{ \frac{\Delta q(\tau)}{c(\tau^{(*)}) } : \tau^{(*)} \in \tau \right\}$\;
    Update $\tau^{(*)}$'s state to {\tt Executed}\;
    $T_{cur} \leftarrow T_{cur} \cup \tau^{(*)}$\;
}
return $T_{cur}'$ or $T_{cur}$ with the highest quality as the final result\;
\end{algorithm}
%\vspace{-10pt}
Hence, the heuristic value is defined as the quality increment divided by the corresponding cost.
Let the currently assigned set of subtasks be $\Tau_{cur}$.
At each iteration, the algorithm greedily selects a subtask $\tau^{(*)}$ from the set $\tau - \Tau_{cur}$, such that the heuristic value is maximized.
Formally, the greedy rule is to find a subtask $\tau^{(*)}$ as follows.
\begin{align}
\small
\tau^{(*)} & = \underset{\tau^{(j)}}{argmax } \frac{\Delta q(\tau)}{c(\tau^{(j)}) } = \underset{\tau^{(j)}}{argmax} \frac{q(\Tau_{cur} \cup  \tau^{(j)}) - q(\Tau_{cur})}{c(\tau^{(j)})} \nonumber
\end{align}
%%%%%%%%%%%%%%% xk commented at June 9 %%%%%%%%55%
%\begin{align}
%\small
%\tau^{(*)} & = argmax_x \frac{\Delta q(\tau)}{C(\tau) } = argmax_{i,j} \frac{q(\Tau_{cur} \cup  \tau^{(j)}) - q(\Tau_{cur})}{C(\tau^{(j)}, w_i^{(j)})} \nonumber
%\end{align}

%worker-and-subtask assignment, $x_{\tau^{(j)}, w_i^{(j)}}$, which maximizes the quality increment, $\Delta q$, together with the lowest cost.
%Formally, the greedy rule is to find an assignment $x^*$ as follows.
%\begin{align}
%\small
%x^* & = argmax_x \frac{\Delta q(x)}{C(x) } = argmax_{i,j} \frac{q(\Tau_{cur} \cup  \tau^{(j)}) - q(\Tau_{cur})}{C(\tau^{(j)}, w_i^{(j)})} \nonumber
%\end{align}
%\vspace{-2pt}
\begin{figure*}
  \center
%\vspace{-20pt}
  \includegraphics[height=1.9in,width = 2\columnwidth]{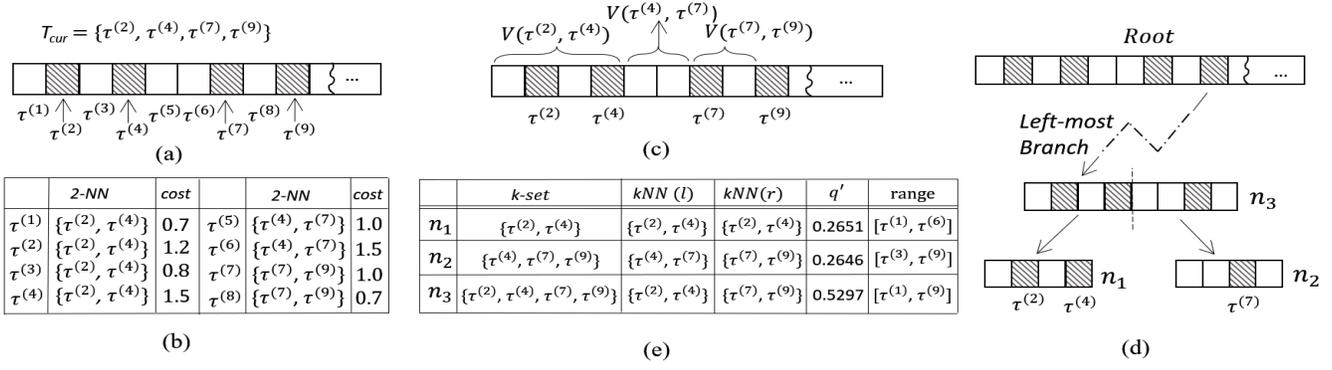}

\vspace{-5pt}
  \caption{An Example. ($k=2$, $t_s = 4$, $m=100$): (a) Current state $T_{cur}$; (b) $2$-Nearest Neighbors; (c) $1$-dimensional Order-$2$ Voronoi Diagram; (d) Tree-structured Index; (e) Auxiliary Information.}

  \label{fig:voronoi}
%\vspace{-15pt}
\end{figure*}

We use an example to illustrate the process.
%Assume a TCSC task consisting of a set of subtasks.
Assume that there are $4$ executed subtasks, i.e., $T_{cur} = \{ \tau^{(2)}, \tau^{(4)}, \tau^{(7)}, \tau^{(9)} \}$, of a TCSC task, represented by shaded slots, as shown in Fig.~\ref{fig:voronoi} (a).
%Given the set of currently executed subtasks $T_{cur} = \{ \tau^{(2)}, \tau^{(4)}, \tau^{(7)}, \tau^{(9)} \}$ (Fig.~\ref{fig:voronoi} (a)),
Algorithm~\ref{alg:single} enumerates all remaining slots/subtasks.
At each iteration, a slot is selected for tentative execution in order to find the one with the highest heuristic value.
If $\tau^{(1)}$ is chosen and tentatively executed, $\rho(\tau^{(1)})$ is reduced to $0$ and the heuristic value increment is $0.0016$.
However, if $\tau^{(5)}$ is tentatively executed, $\tau^{(6)}$'s $2$-NN result is changed from $\{\tau^{(4)}, \tau^{(7)}\}$ to $\{\tau^{(5)}, \tau^{(7)}\}$.
Therefore, both $\rho(\tau^{(5)})$ and $\rho(\tau^{(6)})$ should be recalculated for getting the quality and the heuristic values.
The process is repeated for all the unexecuted slots. Finally, $\tau^{(1)}$ is selected, since it derives the maximum heuristic value in this example.
By setting the greedy strategy as such, Algorithm~\ref{alg:single} guarantees a $(1-1/\sqrt{e})$ approximation to the optimal solution, as shown in \cite{article}.

{\bfseries Complexity Analysis}. First, the number of iterations, i.e., the outer loop of Algorithm~\ref{alg:single} (line $4$), is at the level of $O(m)$. Second, the inner loop (line $5$) is at the level of $O(m)$, since one needs to try all $m$ subtasks and get their heuristic values in order to find the one maximizing the overall quality.
Third, for each trial of a subtask, one needs to calculate the corresponding heuristic value, i.e., the overall quality increase of implementing the subtask, according to the quality metric function. The overall quality increase is the summation of individual quality increments of all other $m-1$ subtasks, so the complexity is $O(m)$ (line 6).
Fourth, for quality increment of a individual subtask, $k$ timeslots in the neighborhood should be visited.
In our implementation, we maintain a sorted list for subtasks that are sorted in the ascending order of the corresponding time slots.
During the query evaluation, $O(log(m))$ cost is used for finding the nearest assigned subtask, and then $O(k)$ cost is used for refining the exact $k$-NN.
%More, if workers are spatially indexed, the time cost of finding a nearest worker for a subtask is expected to be $O(log(n))$, where $n$ is the total number of workers.
Therefore, the total time complexity is $O(m^3log(m))$.

In summary, the approximation algorithm gives a polynomial alternative for tackling the NP-hard optimization problem. However, the computational overhead makes it impractical in real-time task assignment scenarios. In the sequel, we propose a series of techniques for better efficiency and scalability.
%\vspace{-10pt}
\subsection{Efficient Heuristic Value Calculation}
\label{subsec:eff}

%\begin{figure}
%\centering
%\subfigure[Finishing subtask $\tau^{(3)}$] {\includegraphics[height=0.8in,width=1.6in]{fig/error(1).pdf}}
%\subfigure[Finishing subtasks $\tau^{(3)}, \tau^{(6)}$] {\includegraphics[height=0.8in,width=1.6in]{fig/error(2).pdf}}
%\caption{The computation of $k$-NN error(assume $k=2$)}
%\label{fig:error_compute}
%\end{figure}

%\begin{figure*}
%\centering
%\subfigure[The $k$-NN neighbour for each subtask]{
%    \begin{minipage}{9cm}
%    \includegraphics[height=1.5cm,width=6.5cm]{fig/tree(1).pdf}
%   % \caption{The $k$-NN neighbour for each subtask}
%    \end{minipage}
%}
%
%\subfigure[Node information]{
%    \begin{minipage}{8cm}
%    \includegraphics[height=2.5cm,width=7cm]{fig/tree(3).pdf}
%    %\caption{Node information}
%    \end{minipage}
%}
%\subfigure[Tree structure]{\includegraphics[height=4.5cm,width=8.5cm]{fig/tree(2).pdf}}
%\caption{Tree-based heuristic value calculation(assume $ts = 4, k = 2$)}
%\end{figure*}

%For example, we calculate the error of subtask $\tau^{(4)}$ when the execution is shown as Fig \ref{fig:error_compute}, respectively. In Fig \ref{fig:error_compute}(a), the number of executed subtask less than $k$, then $\rho_{err}(\tau^{(4)}) = \frac{1}{2} + \frac{|\tau^{(3)},\tau^{(4)}|_i}{2m} = \frac{7}{12}$. After executing a new subtask $\tau^{(6)}$(Fig \ref{fig:error_compute}(b)), the number of executed subtask equals $k$, and $\rho_{err}(\tau^{(4)}) = \frac{|\tau^{(3)},\tau^{(4)}|_i + |\tau^{(4)},\tau^{(6)}|_i}{2m} = \frac{1}{4}$.

The idea of accelerating the algorithm is in two parts.
First, for finding the maximum heuristic value, how to avoid unnecessary enumeration of all $m$ subtasks.
Second, the calculation of a heuristic value refers to the summation of partial qualities of all subtasks. For the heuristic value calculation, how to maximally reuse the computation so as to avoid unnecessary checking of all time slots.
%Second, how the effect of quality increment, so that the recalculation of the quality value, as well as the heuristic value, can be reduced to a subinterval, instead of all the $m$ slots?

%\begin{figure}
%\centering
%\includegraphics[height=2.5cm, width=5cm]{fig/knn.pdf}
%\caption{The locality of $k$-NN searching}
%\label{knn_locality}
%\end{figure}

{\bf Locality of $k$-NN Searching.} We try to scale down the problem by considering the locality of $k$-NN searching. We observed from Fig.~\ref{fig:voronoi} (b) that if two slots are proximate, their $k$-NN results tend to be similar. For example, $\tau^{(1)}$ and $\tau^{(2)}$ share the same $k$-NN results, i.e., $\{\tau^{(2)}, \tau^{(4)}\}$.
%In the example of Fig.\ref{knn_locality}, all time slots are divided into three sets, denoted as ${s_1,s_2,s_3}$. The $k$-NN results of time slot within a set is the same. As the inserting time slot belongs to the set $s_2$, it would not change the heuristic value of time slots which belong to other set except $s_2$. {\bf fig should be updated}
Theoretically, the solution space of $k$-NN searching over the $m$ subtasks is a one-dimensional order-$k$ Voronoi diagram. The domain space is a one-dimensional interval, i.e., from $1$ to $m$. The diagram splits $m$ slots into disjoint intervals, called Voronoi cells, such that the $k$-NN searching provides the same result if queries are within the same Voronoi cell. In Fig.~\ref{fig:voronoi} (c), for example, the Voronoi cell $V(\tau^{(2)}, \tau^{(4)})$ covers slots from $\tau^{(1)}$ to $\tau^{(4)}$, meaning that all slots from $\tau^{(1)}$ to $\tau^{(4)}$ take $\{\tau^{(2)}, \tau^{(4)}\}$ as their $2$-NN results.

Such a structure facilitates the algorithm evaluation in two aspects. First, the time cost for $k$-NN searching can be reduced from $O(log(m))$ to constant time $O(1)$, as the diagram pre-computes the solution space for the $k$-NN queries. Second, the diagram accelerates the calculation of heuristic values.
For the tentative execution of a subtask, the heuristic value is calculated by enumerating all other slots which takes $O(m)$ (Algorithm~\ref{alg:single}, line 6).
%Originally, it takes $O(m)$ by enumerating all other $m-1$ slots to get the heuristic value for one slot (Algorithm~\ref{alg:single}, line 6).

Recall that the finishing probability of an unexecuted subtask depends on its $k$-NN interpolation. So, the value of finishing probability of an unexecuted subtask does not change, if the order-$k$ Voronoi cell to which it belongs does change, with the tentative execution.
%It shows that the partial quality of a slot does not violate, as long as its belonging Voronoi cell does not change with the tentative slot execution.
This way, the problem of quality increment calculation is transformed to the reformulation of Voronoi cells w.r.t. the tentative subtask execution, which can be handled locally, since the diagram handles such updates locally~\cite{voronoi-book}. We cover more details below.

The technical challenge arising is that there could be a large number of order-$k$ Voronoi cells, making the gains of local computation not worthy of the overhead of the Voronoi diagram construction.
In particular, the average number of order-$k$ cells is $O(k(m-k))$~\cite{voronoi-book}.

{\bf Approximated One-dimensional Voronoi Diagram.} To handle that, we propose an approximate version of one-dimensional order-$k$ Voronoi diagram. The idea is to use an aggregated binary tree for Voronoi cell indexing and Voronoi diagram approximation (Fig.~\ref{fig:voronoi} (d)).
In the tree, each node represents a time segment $[l, r]$, where $l$ and $r$ are for the two slots on the segment's left and right ends, respectively. The root node is the interval of the entire $m$ slots. For each node, we store the auxiliary information in the form of a quadruples, i.e., $\langle k$-$set, knn(l), knn(r), q' \rangle$, as shown in Fig.~\ref{fig:voronoi} (e).

The $k$-set of a node is the union of $k$-NN results for all its offsprings.
%The elements in $k$-set are sorted.
%In particular, we use $[l_{ks}, r_{ks}]$ to represent the range, where $l_{ks}$ ($r_{ks}$) is the lowest (highest) value of the corresponding time segment.
$knn(l)$ and $knn(r)$ are $k$-NN results of the two end slots. The $k$-NN results are sorted in ascending order of the distance to $l$ (or $r$), so that the distance from $l$ (or $r$) to its $k$-th nearest neighbor can be fast retrieved, denoted by $k_{max}(l)$ (or $k_{max}(r)$). So, we can derive the influence range of a node as $[max(1, l-k_{max}(l)), min(m, r+k_{max}(r))]$, such that the quadruple of the node can be affected if a tentatively executed slot is within the influence range.
$q'$ is the partial quality value of the node.
For subtask $\tau^{(j)}$, its partial quality equals $p^{(j)}log(p^{(j)})$.
The quality of a node is the summation of partial qualities of all subtasks in its offspring.
%The overall tree structure is like an aggregated binary tree.
This way, the quality value can be fast retrieved by querying the upper level nodes of the tree structure. Upon updating, only necessary subtrees are retrieved and revised, following the style of updating an aggregated tree-structure.

%{\bf How the tree is used for solving the k-NN problem in quality calculation?}
{\bf Maximum Heuristic Value Calculation.}
With the tree structure, it takes much less computational overhead than enumerating all $m$ slots. The process is implemented by traversing the tree in a best-first manner with an associated heap. The elements of the heap are tree nodes which are sorted in descending order of the upper bounds. The higher an element ranks in the heap, the more likely it corresponds to the maximum heuristic value increment.

Next, we study how to upper bound the effect of a tentative insertion.
A tentative insertion affects the $k$-NN interpolation results so the quality value varies. The effect can be categorized into {\it inter-node} and {\it intra-node} cases\footnote{We consider the leaf node for ease of presentation. The calculation of upper bounds for non-leaf nodes can be done in a bottom-up manner, as the construction process of an aggregated tree-structure.}.

For the intra-node case, if an unexecuted slot is tentatively executed, of the same node, another unexecuted slot's $k$-th NN distance is reduced to $1$ at most. It corresponds to the extreme case that the slot is next to the tentative executed slot, whereas the distance between them is $1$. So, the interpolation error ratio can be lower bounded as follows.
%\vspace{0pt}
%\begin{lemma}
%({\bf Lower Bound for Interpolation Error Ratio.})
\begin{align}
\small
\label{eqn:lb}
\rho_{err}(\tau^{(j)}) \geq \frac{1}{k\cdot m} [\sum_{e \in S_{(k-1)NN}(\tau^{(j)})} |e, \tau^{(j)}|_i + 1]
\end{align}
%\end{lemma}

%The proof is omitted due to page limits.
%\begin{proof}
%If an executed slot is executed, the $k$-th NN distance of unexecuted slots of the same node can be reduced to $1$ at most. Therefore, the interpolation error of unexecuted slots is equal to $k-1$ NN distance plus $1$.
%\end{proof}

Therefore, the upper bound of a node's quality can be derived by the lower bound of $\rho_{err}$ (Equation~\ref{eqn:lb}), since the value of $q(.)$ is inversely proportional to that of $\rho_{err}(.)$. So, the upper bound of a node's heuristic value is calculated by the maximum quality change divided by the minimum cost of all unexecuted subtasks in the node.

For the inter-node case, a tentatively executed slot of a node would also change the $k$-NN interpolation result of other nodes. For example, in Fig.~\ref{fig:voronoi} (d), we have two leaf nodes, $n_1$ and $n_2$. If $\tau^{(3)}\in n_1$ is tentatively executed, $\tau^{(5)}\in n_2$'s second nearest neighbor changes from $\tau^{(7)}$ to $\tau^{(3)}$. The corresponding quality value change should also be incorporated and updated.

At each iteration, the top element of the heap is popped up and the child nodes are inserted. If the top element is a time slot/subtask, the exact heuristic value increase can be obtained. Suppose the value is $\theta$.
Then, all other elements in the heap whose upper bounds are below $\theta$ can be pruned. The process repeats until the heap is empty. The slot that gives the maximum heuristic value is then obtained.

The advantages of using the tree structure are two-fold. First, using best-first searching for obtaining the slot with the potentially maximum heuristic value is expected to take $O(log(m))$ time.
Second, the locality of $k$-NN searching is reflected by the decomposition of tree leaf nodes.
Suppose that calculating the updated heuristic value of a leaf node takes constant time.
Then, instead of enumerating $O(m)$ slots for calculating the updated heuristic value, one takes $O(log(m))$ time to retrieve and update the relevant leaf nodes, reducing the corresponding computation overhead from $O(m)$ to $O(log(m))$.
The overall cost is $O(mlog^3(m))$.
%Next, we discuss how the tree structure is constructed.

%Solution: we have to go through the m elements to decide which one gives best $\Delta$ quality / cost;
%For a node, if its k-set's value range has no overlap with an element, no need to recalculate the node plogp table.
%Otherwise, find all nodes whose k-set's value range covers e', and calculate the summation of $\Delta$ quality
%Then, for an element e', since we know its cost and its corresponding quality change, we can do pruning to test fewer elements

{\bf Tree Construction.}
We consider the tree construction in an incremental manner.
Suppose that a time slot $e$ is to be tentatively executed, meaning that $e$ is with the maximum current heuristic value.
The process of tree construction is triggered accordingly, and is revoked recursively.
At each iteration, we test if a subtree will be affected. If yes, we update the associated quadruples and forward the updates to the descendants. Otherwise, the entire subtree is skipped.

%When the number of assigned slots is greater than $k$,
We summarize two cases to determine whether the current node will be affected and therefore updated by the tentative execution. Based on that, we can disqualify irrelevant nodes at a higher level to save the overhead of the tree construction.
%A subtree can be pruned for the structure adjustment, if it is detected to be not affected by the insertion. As will be shown, such pruning techniques save not only the overhead of tree construction, but also the quality increment calculation.
%Notice that the elements should always be incorporated if the set of currently assigned time slots is less than $k$.

%There can several cases that the element $e$ should be considered for the current time segment $[l, r]$.

\begin{itemize}
\item \emph{Case 1.} A node will be affected, if the tentatively executed slot is within the influence range of the node.
%\item \emph{Case 2.} A node will be affected, if its corresponding $kNN(l)$ or $kNN(r)$ changes by incorporating $e$.
\item \emph{Case 2.} A node will not be affected, if its parent node is not affected, by $e$.
\end{itemize}

%Cases $1$ and $3$ are intuitive and corresponding discussion is omitted due to page limits. %For Case $2$, we have to prove two things to show its correctness {\bf refer to the lemma}.

By doing so, we can find appropriate nodes for updating.
%For the efficient determination of Case 2, we use $k$NN distance, i.e., the distance to $k$-th nearest neighbor (for a segment's end point). Given the new slot $e$, we can simply test if the slot is within the current $k$NN distance.
The updates can thus be propagated to the leaf level so that the node splitting should be handled.

{\bf Splitting and Stopping Conditions.}
During splitting, a node is decomposed into two sub-nodes, with its time segment split into two equal sized sub-segments.
The quadruple of the node can thus be partially inherited by its sub-nodes.
For example, the left sub-node inherits the $knn(l)$ of its parent, because they share the same left end slot.
In particular, we consider two stopping conditions.
\begin{itemize}
\item \emph{Condition 1.} For a node with segment $[l, r]$, splitting stops, if $knn(l) = knn(r)$.
\item \emph{Condition 2.} If the length of a node's segment is smaller than a pre-specified threshold, $t_s$, splitting stops.
\end{itemize}
Condition $1$ guarantees that current segment belongs to the same order-$k$ Voronoi cell, so that there is no need for further splitting. Condition $2$ limits the depth of the tree structure to $\lceil log_{t_s}(m) \rceil$. It serves as a knob for tuning the approximation accuracies so as to control the construction overhead of the tree structure.
The correctness of Conditions $1$ is guaranteed by Lemma~\ref{lem:cond1} in Appendix.

\section{MULTIPLE TASK ASSIGNMENT}
\label{sec:multiple}

The multi-task assignment problem is essential to the practical deployment of crowdsourcing platforms, where multiple tasks are submitted, scheduled, and executed, simultaneously.
However, the computational overhead of multi-task assignment is high, even its simplified version, i.e., single-task assignment, is NP-hard.
%The multi-task assignment problem is more complex than its single-task counterpart.
Even with the approximation solution, the algorithm has to iteratively retrieve a subtask from all given $|\mathcal{T}|$ tasks that maximizes the heuristic value, making the algorithm scale quadratically with $|\mathcal{T}|$, which is not scalable for handling a large number of tasks.
A practical way of handling multi-task case is to fully exploit the hardware capabilities of the TCSC server, with parallel computing techniques.

In this section, we study the multi-task assignment scenario, by considering two variants regarding the settings of optimization targets.
The first variant evaluates the overall quality by the summation of qualities of individual tasks, that belong to the given task set.
The second variant improves the overall quality by reinforcing the ``weakest'' single task, i.e., maximizing the minimum single task quality.
Both variants are on improving the overall quality of the given set of tasks with budget constraints\footnote{A dual version of our problem can be minimizing the task costs with quality constraints. It can be handled with the primal-dual method~\cite{DBLP:conf/approx/ChekuriK04}, which reduces the problem to the one studied in this work.}. % The details are hence omitted due to page limits.
In particular, we use $q_{sum}$ and $q_{min}$ to represent the optimization target functions, respectively. %For each function, we study corresponding properties supporting the approximation algorithms.
%\vspace{-5pt}
\begin{figure*}[ht]
  \center
  %\vspace{-10pt}
\includegraphics[height=1.1in,width = 2\columnwidth]{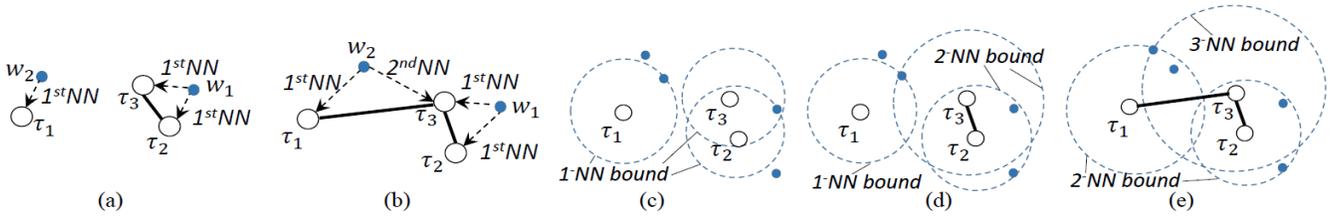}
%\vspace{-5pt}
  \caption{Worker Conflicting and Group-level Parallelization: an example of considering travel distances between workers and subtasks as costs.}
  \label{fig:group}
  %\vspace{-15pt}
\end{figure*}

\subsection{Maximizing Summation Quality}
%\vspace{-5pt}
\label{subsec:maxsum}

The first optimization target is on maximizing the summation quality of all tasks as following.
%\vspace{-20pt}
\begin{definition}
{\bf Summation Quality.} Given a set of tasks $\mathcal{T} = \{\tau_1, \tau_2, ...\}$, we define the summation quality as:
%\vspace{-2pt}
    \begin{equation}
    \label{eqn:Quality_M}
           q_{sum}(\mathcal{T})= \sum_{i=1}^{|\mathcal{T}|}q(\tau_i \big{|} \tau_i \in \mathcal{T})
    \end{equation}
\end{definition}
%\vspace{-10pt}
\begin{problem}
{\bf Multiple-task Summation Quality Maximization with Fixed Budgets (MSQM {\it in short}).} Given a set of tasks $\mathcal{T} = \{\tau_1, \tau_2, ...\}$, the MSQM problem is to find an assignment for the tasks in $\mathcal{T}$, such that the summation quality is maximized, and the overall cost $\sum c(\tau_i)$ does not exceed the given budget $b$.
\vspace{-10pt}
   \begin{equation}
   \vspace{-5pt}
     \begin{aligned}
       Maximize~~q_{sum}(\mathcal{T})\\
       subject~to~\sum_{i=1}^{|\mathcal{T}|} c(\tau_i) \leq b
     \end{aligned}
     \vspace{-3pt}
   \end{equation}
 \end{problem}
%\vspace{-5pt}
We can prove that the MSQM problem is NP-hard, by reducing it from the sQM problem, whose NP-hardness is proved by Lemma~\ref{lem:np}.

%specifying the subtasks to be executed and then reducing it from the generalized assignment problem (GAP), whose NP-hardness is proved by Fisher et al.\cite{fisher1986multiplier}.

%\begin{lemma}
%    The $MSQM$ problem is NP-hard.
%\end{lemma}

%\begin{proof}
%The sQM problem is NP-hard, and the MSQM problem can be reduced to the sQM problem, if we treat the set of multiple tasks as a whole task. Thus, the MSQM problem is NP-hard.
%\end{proof}

%\vspace{-5pt}
\begin{lemma}
$q_{sum}(.)$ is submodular and non-decreasing.
\end{lemma}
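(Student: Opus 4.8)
The plan is to reduce this to the single-task result, Lemma~\ref{lem:q}, by exploiting the block structure of the multi-task ground set together with the fact that both submodularity and monotonicity are preserved under summation.

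First I would fix the ground set. In the multi-task setting an assignment selects worker-to-subtask elements, and each such element belongs to exactly one task; accordingly the ground set is the disjoint union $V = \bigcup_i V_i$, where $V_i$ collects the candidate subtask executions for task $\tau_i$ and the blocks $V_i$ are pairwise disjoint (each subtask belongs to exactly one task). A global selection is a set $S \subseteq V$, and the quality of task $\tau_i$ reads only the portion of $S$ lying in its own block: $q(\tau_i)$ evaluated at $S$ depends on $S$ solely through $S \cap V_i$. Hence $q_{sum}(S) = \sum_i q(\tau_i)(S \cap V_i)$.

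Next I would lift each single-task quality to the full ground set. Define $\hat q_i(S) = q(\tau_i)(S \cap V_i)$ for $S \subseteq V$. By Lemma~\ref{lem:q}, $q(\tau_i)$ is non-decreasing and submodular on $2^{V_i}$. Lifting by projection preserves both properties: for $X \subseteq Y \subseteq V$ we have $X \cap V_i \subseteq Y \cap V_i$, so monotonicity of $q(\tau_i)$ gives $\hat q_i(X) \le \hat q_i(Y)$; and since $(X \cap Y)\cap V_i = (X\cap V_i)\cap(Y\cap V_i)$ and $(X\cup Y)\cap V_i = (X\cap V_i)\cup(Y\cap V_i)$, the inequality of Definition~\ref{submodular} applied to $q(\tau_i)$ on these images transfers verbatim to $\hat q_i$ on $V$. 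Thus each $\hat q_i$ is non-decreasing and submodular on the common ground set $V$.

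Finally I would close by summation. A sum of non-decreasing set functions is non-decreasing, and a sum of submodular functions is submodular: adding the defining inequalities $\hat q_i(X\cap Y)+\hat q_i(X\cup Y)\le \hat q_i(X)+\hat q_i(Y)$ over all $i$ yields the same inequality for $q_{sum}=\sum_i \hat q_i$, establishing the claim. The only delicate step is the projection argument above, and I would be careful to state it on the global ground set $V$ rather than on each $V_i$ separately, since submodularity is a property of a function \emph{together with} its ground set; the budget constraint couples the tasks only through the feasible region and plays no role in the structure of the objective $q_{sum}$ itself.
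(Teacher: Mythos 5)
Your proof is correct, but it takes a genuinely different route from the paper's. The paper proves this lemma by appealing to Lemma~\ref{lem:double} (composite submodular functions): it notes that $q(.)$ is non-decreasing and submodular (Lemma~\ref{lem:q}), declares the summation function ``both convex and concave,'' and treats $q_{sum}$ as a composition of that summation with $q(.)$, so the conclusion follows from the composition rule. You instead work directly on the ground set: you split $V$ into disjoint per-task blocks $V_i$, observe that each task's quality reads only $S \cap V_i$, verify that lifting by projection preserves monotonicity and submodularity (because intersection with a fixed block commutes with $\cup$ and $\cap$), and then sum the defining inequalities. Both arguments reach the same conclusion, but yours is the more solid one: Lemma~\ref{lem:double} as stated concerns a scalar concave $\phi: \mathbb{R} \rightarrow \mathbb{R}$ composed with a \emph{single} set function $h$, whereas $q_{sum}$ aggregates $|\mathcal{T}|$ distinct set functions, so the paper's invocation is really shorthand for the standard fact you prove explicitly --- that a sum of monotone submodular functions on a common ground set is monotone submodular. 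Your projection step is exactly the detail needed to put all the $q(\tau_i)$ on that common ground set, and your closing observation that the budget and worker-conflict coupling live only in the feasible region, not in the objective, is the right justification for the separability $q_{sum}(S) = \sum_i q(\tau_i)(S \cap V_i)$ that the paper leaves implicit.
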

%\vspace{-10pt}
\begin{proof}
We have proved that $q(.)$ is non-decreasing and submodular, by Lemma~\ref{lem:q}. The summation function is known as both convex and concave.
The $q_{sum}(.)$ function is a composite function of a summation function and $q(.)$ function. So, the lemma is proved, according to Lemma~\ref{lem:double}.
%
%Let $S_1,S_2$ denote two sets of tasks in $\mathcal{T}$. We are to prove that $q_{sum}(S_1\cap S_2) + q_{sum}(S_1 \cup S_2) \leq q_{sum}(S_1)+q_{sum}(S_2)$.
%Suppose two sets of sub subtasks $\tau^{(1)} \S_1$ and $\tau^{(2)} \in S_2$.
%We can have $\tau^{(1)}$
%
%In order to prove function $q_{sum}$ is submodular, we analyse the executed subtask $x1^{(k)},x2^{(k)}$ for each task $\tau^{(k)}$,
%where $x1^{(k)} \in S1$ $, x2^{(k)}\in S2$, $\tau^{(k)} \in \mathcal{T}$.
%As the function $q(\tau)$ is submodular function, then for each task $\tau^{(k)} \in \mathcal{T}$, there exists $q(x1^{(k)}\cap x2^{(k)})+q(x1^{(k)}\cup x2^{(k)}) \leq q(x1^{(k)})+q(x2^{(k)})$.
%While $q_{sum}$ is the sum of all task quality, obviously, $q_{sum}(S1\cap S2)+q_{sum}(S1\cup S2) \leq q_{sum}(S1)+q_{sum}(S2)$.
%Hence, the $q_{sum}$ is submodular function.
\end{proof}
%\vspace{-5pt}

Based on the properties of submodularity and non-decreasingness, the framework of single task assignment, i.e., Algorithm~\ref{alg:single}, can be applied for handling the multi-task assignment case. The heuristic value is set as the increase of the summation quality divided by the corresponding cost (of a tentatively selected subtask), following the same greedy strategy. Then, the algorithm is to iteratively retrieve a subtask from all given $|\mathcal{T}|$ tasks that maximizes the heuristic value, so that the solution space is $|\mathcal{T}|$ times the size of the single task case, making the algorithm scale quadratically with $|\mathcal{T}|$.
The time complexity is $O(|\mathcal{T}|^2mlog^3(m))$.
To improve the scalability, we aim to derive a parallelization framework for distributing the calculation workload onto multiple computation cores. Ideally, each task can be running independently on different cores, so that the time cost would be $\frac{|\mathcal{T}|}{\text{\# of cores}}$ times that of running a single task, assuming the value of $|\mathcal{T}|$ is larger than the number of cores. But, there occurs correlations between tasks, if two subtasks running on different cores ``compete'' for one worker at some time slot. It happens because it is possible that two subtasks choose the same worker with lowest costs for minimizing the budget decrement.
We call this {\it worker conflicting}, as exemplified in Fig.~\ref{fig:group} (a), where there are three tasks ($\tau_1$ to $\tau_3$) and two workers ($w_1$ and $w_2$). There exist conflicts between $\tau_2$ and $\tau_3$, since they both take $w_1$ as the worker with the lowest cost.
%Notice that a worker can be the nearest neighbor for multiple subtasks simultaneously.

\subsubsection{Group-level Parallelization}
We can have a graph of independent groups, if taking each task as a node and drawing an edge between any two conflicted tasks. If a group of nodes do not have any connections with other groups, it is an independent group. The optimization process of such a group is independent of others and therefore independent groups can be run in parallel.
For example, in Fig.~\ref{fig:group} (a), tasks can be divided into two independent groups, $\{\tau_1\}$ and $\{\tau_2,\tau_3\}$, if any pair of tasks do not compete for workers with lowest costs.

Is it sufficient for considering the conflicts on the lowest costs for deriving the independent groups?
%using $1^{st}$ nearest neighbor queries to derive the independent groups?
The answer is NO. For example, in Fig.~\ref{fig:group} (b), $\tau_2$ and $\tau_3$ have conflicts on $w_1$, so $\tau_3$ opts for $w_2$, who ranks as the worker with the second lowest cost. Unfortunately, $\tau_3$ further have conflicts with $\tau_1$, which has $w_2$ as the worker with the lowest cost. In general, if two subtasks belonging to different tasks are to be executed, one of them has to choose a worker with the second lowest cost, or even the latter.

One may get the independence graph by gradually expanding the searching regions.
For example, assume costs are calculated by the travelling distances from workers to tasks~\cite{todsurvey}\cite{tong17}. Then, a subtask takes the nearest worker as the worker with the lowest cost. For a given task, we call the circle centered at the task's position with the distance between the task and its nearest neighbor as its $1$-NN bound.
The independence graph can be obtained by the following steps.
Initially, we draw $1$-NN bounds for each task, as shown in Fig.~\ref{fig:group} (c). An edge between $\tau_2$ and $\tau_3$ is added, since the two tasks share the same worker that causes conflicts. Next, we draw $2$-NN bounds for both $\tau_2$ and $\tau_3$, so that there exist enough workers for being assigned to the two conflicting tasks, as shown in Fig.~\ref{fig:group} (d).
However, it shows that another worker is within $\tau_1$'s $1$-NN bound and $\tau_3$'s $2$-NN bound, so the two tasks have conflicts and the edge between them is added.
After that, we draw $2$-NN bound for $\tau_1$ and $3$-NN bound for $\tau_3$, as shown in Fig.~\ref{fig:group} (e). In general, if a node of the independence graph is with degree $d$, the $\text{(d+1)}$-NN bound should be drawn. The process repeats until no conflicts are detected.
A drawback of the gradual expanding method is on incurring large groups and heavyweight computation tasks, deteriorating the parallelization performance.

\subsubsection{Task-level Parallelization}
Hereby, we devise the task-level parallelization framework, as depicted in Fig.~\ref{fig:parallel}. We set a thread pool, with a master thread and a set of worker threads waiting for tasks to be concurrently executed.
The master thread is for maintaining the thread pooling on heartbeat monitoring, conflicting controlling, scheduling, and logging. To support the functionalities of the master thread, there are several associated data structures, {\it Heartbeat Table}, {\it Conflicting Table}, and {\it Logging Table}.
%\vspace{-10pt}
\begin{figure}[ht]
\centering
\includegraphics[width = 1\columnwidth]{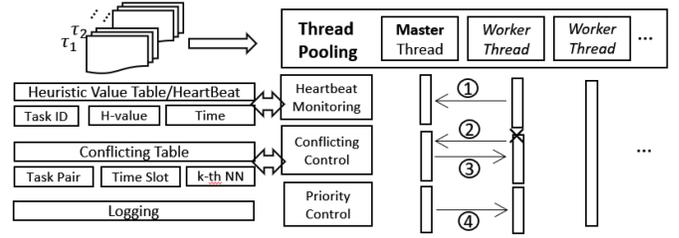}
%\vspace{-15pt}
  \caption{Task-level Parallelization: \textcircled{1} Periodically Sending Heartbeats to Master Thread; \textcircled{2} Reporting to Master Thread that Conflicts Detected; \textcircled{3} Looking Up Conflicting Table and Heartbeat Table, then Ask Worker Thread to Continue or Suspend; \textcircled{4} Adjusting Priorities of Worker Threads.}

  \label{fig:parallel}
\end{figure}

%\vspace{0pt}
Heartbeat Table stores periodically reported heuristic values from currently executed tasks. Logging Table traces the historical records of Heartbeat table.
Conflicting Table stores a series of records for breaking the ties of conflicts. Assume three tasks, $\tau_1$, $\tau_2$, and $\tau_3$, are conflicted at time slot $t$. The information, including the conflicting task sets and corresponding conflicting time slot, is stored at the Conflicting Table of the master thread. In this example, a tuple $\langle \{\tau_1,\tau_2,\tau_3\}, t, 1 \rangle$ is recorded in the conflicting table.
Here, $1$ means the three tasks are to compete for the worker of $1$-NN.
Then, during the task processing, if $\tau_1$ is to execute conflicting slot $t$, its associated thread sends a message to inform the master thread. Upon receiving the message, the master thread looks up the Conflicting table and the Heartbeat table to retrieve the current heuristic values of $\tau_1$, $\tau_2$, and $\tau_3$.
If $\tau_1$'s current heuristic value is higher than $\tau_2$ and $\tau_3$, it continues with the execution of slot $t$. Meanwhile, the master thread checks the availability of workers and updates the record on the conflicting table, by changing the field ``{\it k-th NN}'' from $1$ to $2$, so that $\tau_2$ and $\tau_3$ would compete for the worker with the $2^{nd}$ lowest cost next time, because the first one has been taken by $\tau_1$.
Otherwise $\tau_1$ is suspended and the process continues.
%If the heuristic value of $\tau_1$ is not the highest, $\tau_1$ is suspended.
%The process continues until the given budget is exhausted.

{\bf Discussion.}
The task-level parallel approach is deterministic, meaning that the parallelized task assignment plan is consistent with the non-parallel plan.
The master thread periodically stores and descendingly sorts the heuristic values that are collected in Heartbeat Table, so that the derived plan is the same as the serialized task execution (Algorithm~\ref{alg:single}).
This way, the parallel algorithm follows the approximation framework with guaranteed ratio.
On the other hand, in the parallel environment, it is hard to strictly control the stopping condition, i.e., the timeline when the given budget is exhausted. It is unavoidable that threads with lower heuristic values are executed earlier than those with higher values.
But this can mostly be alleviated with our priority settings.
We set priorities in accordance with the heuristic values of worker threads dynamically, so that the tasks with higher heuristic values are more likely to be processed.
This is also consistent with the greedy strategy of Algorithm~\ref{alg:single}. The priorities of worker threads are initialized as infinity to avoid thread starvation.

\subsection{Maximizing Minimum Quality}
\label{subsec:maxmin}
%\vspace{-5pt}
The second optimization target is on maximizing the minimum quality of all tasks, so that the overall quality is optimized. The problem is formalized.
%\subsubsection{Problem Definition}
%\vspace{-5pt}
\begin{definition}
\label{def:Quality_{min}}
(Minimum Quality) Given a set of tasks $\mathcal{T} = \{\tau_1, \tau_2, \dots\}$, we define the minimum quality as:
    \begin{equation}
      q_{min}(\mathcal{T})= min \left\{q(\tau_i) \big{|} \tau_i \in \mathcal{T} \right\} %\nonumber
    \end{equation}
\end{definition}
%\vspace{-10pt}
\begin{problem}
    {\bf Multi-task Minimum Quality Maximization with Fixed Budgets (MMQM).} Given a set of tasks $\mathcal{T} = \{\tau_1, \tau_2, \dots\}$, the MMQM problem is to find a task assignment for each task $\tau_i \in \mathcal{T}$, such that the minimum quality is maximized, and the overall cost $\sum{c(\tau_i)}$ does not exceed the given budget $b$.
%    The fixed-budget maximum min quality problem is to find a task assignment $x_i$ for each task $\tau_i$ of multiple tasks $\mathcal{T}$ such that min task quality is maximized. Besides, the traveling cost of all tasks can not exceed the corresponding limited budget $b$:
\vspace{-10pt}
       \begin{equation}
       \vspace{-10pt}
         \begin{aligned}
           Maximize~  &q_{min}(\mathcal{T})\\
           Subject~to~&\sum_{i=1}^{|\mathcal{T}|} c(\tau_i) \leq b  \\ \nonumber
         \end{aligned}
         \vspace{-10pt}
       \end{equation}
\end{problem}
%\vspace{-20pt}

We can prove the NP-hardness of the MMQM problem by reducing it to the sQM problem. The submodularity and non-decreasingness of $q_{min}(.)$ function can be proved by Lemma~\ref{lem:qmin}. Hence, the $(1-1/\sqrt{e})$ approximation ratio of $q_{min}$ is achieved by iteratively executing the selected subtask from the task yielding the minimum quality. The subtask execution follows the framework of Algorithm~\ref{alg:single}.
To fast retrieve the task with minimum quality, we maintain a heap for $\mathcal{T}$ tasks. Notice that there is no worker conflict issues for the MMQM problem, since the subtasks are executed in a sequence. So, the total time complexity is $O(mlog^3(m)log(|\mathcal{T}|))$.

%The NP-hardness of the MMQM problem can be proved in a similar way of the MSQM problem, and is hence omitted due to page limits. Also, we can prove the submodularity of $q_{min}(.)$ function, as shown in Lemma~\ref{lem:qmin}.  Therefore, the framework of Algorithm~\ref{alg:single} can be applied for each task $\tau \in \mathcal{T}$.

%\subsubsection{Properties}
%\vspace{-5pt}
\begin{lemma}
\label{lem:qmin}
    $q_{min}(\mathcal{T})$ is submodular and non-decreasing.
\end{lemma}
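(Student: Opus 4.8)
The plan is to treat $q_{min}$ as a set function over the ground set $V=\bigcup_i V_i$ of all candidate subtasks, where $V_i$ collects the subtasks of $\tau_i$, the $V_i$ are pairwise disjoint, and each $q(\tau_i)$ depends only on $S\cap V_i$ and is non-decreasing and submodular by Lemma~\ref{lem:q}. I would prove the two claimed properties separately. Non-decreasingness is the easy half: for $X\subseteq Y$, monotonicity of every $q(\tau_i)$ gives $q(\tau_i)(X\cap V_i)\le q(\tau_i)(Y\cap V_i)$, and taking the minimum over $i$ preserves the inequality, so $q_{min}(X)\le q_{min}(Y)$. This needs only that the pointwise minimum of non-decreasing functions is non-decreasing, so I would dispatch it in a single line.

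Submodularity is where I expect the real difficulty, and it is the main obstacle. The natural attempt is to read $q_{min}$ as a concave transform of the per-task qualities and invoke Lemma~\ref{lem:double}, since $(x_1,\dots,x_{|\mathcal{T}|})\mapsto\min_i x_i$ is concave and coordinate-wise non-decreasing. But Lemma~\ref{lem:double} only composes a \emph{single} monotone submodular $h$ with a \emph{univariate} concave $\phi:\mathbb{R}\to\mathbb{R}$; it says nothing about composing the multivariate $\min$ with several submodular functions, and in fact the pointwise minimum of monotone submodular functions fails to be submodular even on disjoint ground sets. Concretely, with two tasks of two subtasks each and $k=1$, executing a single subtask already yields $q(\tau_i)=1$: taking $X$ to contain one subtask of $\tau_1$ and $Y$ one subtask of $\tau_2$ gives $q_{min}(X)=q_{min}(Y)=q_{min}(X\cap Y)=0$ yet $q_{min}(X\cup Y)=1$, so the defining inequality $q_{min}(X)+q_{min}(Y)\ge q_{min}(X\cup Y)+q_{min}(X\cap Y)$ is violated.

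The composition route therefore cannot establish the lemma as stated, so I would concentrate the effort precisely here. One direction is to hunt for extra structure of the $q(\tau_i)$ beyond generic monotone submodularity that the $\min$ could respect; I am skeptical this exists, given the counterexample above uses the genuine $q(\cdot)$. The more plausible salvage is to not assert submodularity of $q_{min}$ at all and instead obtain the $(1-1/\sqrt{e})$ guarantee through a robust/thresholding reduction — for instance a saturate-style binary search over a target quality level, where each feasibility check becomes a truncated, genuinely monotone-submodular coverage problem amenable to the greedy analysis. Reconciling the submodularity claim with the stated approximation ratio for MMQM is thus the crux I would have to resolve.
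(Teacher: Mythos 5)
Your analysis is correct, and it exposes a defect in the paper's own argument rather than a gap in your attempt. The paper proves Lemma~\ref{lem:qmin} by exactly the route you anticipated and rejected: it asserts that ``the minimization function is known as a concave function,'' calls $q_{min}(.)$ a composite of the minimum and $q(.)$, and invokes Lemma~\ref{lem:double}. As you observe, Lemma~\ref{lem:double} applies only to $\phi(h(S))$ for a \emph{single} monotone submodular $h$ and a \emph{univariate} concave $\phi:\mathbb{R}\rightarrow\mathbb{R}$; it says nothing about the multivariate map $(q(\tau_1),\dots,q(\tau_{|\mathcal{T}|}))\mapsto\min_i q(\tau_i)$, and the pointwise minimum of monotone submodular functions on disjoint ground sets is not submodular in general. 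Your counterexample checks out against the paper's own formulas: with $m=2$ and $k=1$, executing one subtask gives $p^{(1)}=\tfrac{1}{2}$ and, via Equations~\ref{eqn:p} and \ref{eqn:rho}, $p^{(2)}=\tfrac{1}{2}(1-\tfrac{1}{2})=\tfrac{1}{4}$, so $q(\tau_i)=1$, while the empty assignment gives $q(\tau_i)=0$; taking $X$ and $Y$ to each contain one subtask of different tasks yields $q_{min}(X)+q_{min}(Y)=0$ but $q_{min}(X\cup Y)+q_{min}(X\cap Y)=1$, violating Definition~\ref{submodular}. (Note the same objection does not sink the companion lemma for $q_{sum}$, since nonnegative sums of submodular functions are submodular, though that lemma's proof also leans on Lemma~\ref{lem:double} where a direct sum argument is what is actually needed.) Your one-line proof of the non-decreasing half is correct and is the only part of the lemma that survives.

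Your diagnosis of the downstream consequence is also the right one: the claimed $(1-1/\sqrt{e})$ greedy guarantee for MMQM in Section~\ref{subsec:maxmin} rests on this lemma, and for max-min objectives over monotone submodular functions the greedy rule can perform arbitrarily badly; no constant-factor approximation is possible in general. The saturate-style reduction you sketch---binary search on a target level $t$, replacing the objective by the genuinely monotone submodular function $\sum_i \min\{q(\tau_i),t\}$ and accepting a bicriteria (budget-relaxed) guarantee---is the standard and, as far as I can see, the correct repair. So the honest conclusion is: the paper's stated lemma is false as written, its proof misapplies its own composition lemma, and your proposal identifies both the failure point and the appropriate fix.
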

%\vspace{-10pt}
\begin{proof}
We have proved that $q(.)$ is non-decreasing and submodular, by Lemma~\ref{lem:q}. The minimization function is known as a concave function.
The $q_{min}(.)$ function is a composite function of a minimization function and $q(.)$ function. So, the lemma is proved, according to Lemma~\ref{lem:double}.
\end{proof}

%\vspace{-5pt}

%\vspace{-5pt}
\section{Experiments}
\label{sec:experiments}
\vspace{-5pt}
We cover the experimental setup in Section~\ref{subsec:setup}, and report the performance of our proposals in Sections~\ref{ret:quality} and \ref{ret:eff}.
\vspace{-5pt}
\subsection{Experiment Setup}
\label{subsec:setup}

{\bfseries Dataset.}\footnote{Default parameters are bolded.}
We use a real dataset\footnote{ https://www.microsoft.com/en-us/research/publication/t-drive-trajectory-data-sample/} of $10,357$ worker trajectories for representing workers' movements. For each worker trajectory, we randomly cut out a set of pieces, ranging from $1$ to $5$ time slots, as a worker's active slots.
%{\bf explain how the available slots are registered.}
We use a public data generator\footnote{http://chorochronos.datastories.org/sites/default/files/algorithms\\/SpatialDataGenerator.zip} to generate a series of datasets to simulate the locations of TCSC tasks, following {\bf uniform}, Gaussian, and Zipfian distributions.
For parameters of Gaussian distribution, the mean is set as the domain center and the sigma is set as the $1/6$ of the domain sidelength, so that most of generated data are within the domain space.
For Zipfian distribution, the exponent is set to $1$, which is a common setting, and the only option of the generator.
We also use a Beijing POI dataset for representing tasks' locations\footnote{https://ieee-dataport.org/documents/beijing-poi-datasets-geographical-coordinates-and-ratings}. We set the cost for an assignment to be the distance that a worker moves to the assigned task, following the common setting of spatial crowdsourcing~\cite{todsurvey}\cite{tong17}.
We vary the number of TCSC tasks to test the scalability of our proposals by setting the number of tasks as 100, {\bf 300}, and 500, respectively.
For each TCSC task, we set the task length (i.e., the number of subtasks) to 300, {\bf 500}, and  1000, respectively.
The budget is set to $\$50$, {\bf \$100}, $\$200$, corresponding to about 12.5\%, 25\%, and 50\% of the average cost of a TCSC task in the default setting. By default, $k$ is set to $3$ for the $k$-NN interpolation, $t_s$ is set to $4$. and the number of cores is set to $10$ for multi-task parallelization.

{\bfseries Implementation.} All algorithms are implemented in Java and run a PC with Intel(R) Xeon(R) CPU $E5$-$2698v4$ @ $2.20$GHz and $256$GB main memory. By default, we use $12$ cores for running experiments on multi-task assignment.
Each reported value is the average of $20$ runs.
%We list the detailed setting of parameters in Table~\ref{tbl:para}.

%\begin{table}[!t]
%\small
%\centering
%\caption{Datasets}
%\label{tbl:para}
%\begin{tabular}{|c|l|}
%\hline
%Factor & Setting \\ \hline
%$m$ & 300, {\bfseries 500}, 1000\\
%\hline
%$|W|$ & 5000, 8000, {\bfseries 10000}\\
%\hline
%$\mathcal{T}|$ & 100, {\bfseries 300}, 500\\
%\hline
%$t_s$ & \\
%\hline
%$b$ & 20\%, {\bfseries 30\%}, 50\% \\
%\hline
%$k$ & 1, {\bfseries 3}, 5, 10 \\
%\hline
%\# of cores & 1, 2, 4, 8, 10, {\bfseries 12}, 14, 16 \\
%\hline
%distribution of task & {\bfseries uniform}, Gauss, Zipfian \\
%\hline
%\end{tabular}
%\end{table}
%\vspace{-10pt}
%\subsection{Experiment Results}
%\label{subsec:ret}

\subsection{Results on Quality}
\label{ret:quality}

We test the effectiveness of our quality-aware task assignment method in Figure~\ref{fig:single-quality}. We compare the quality of our method, Approx, with two competitors, OPT and Rand.
OPT offers the optimal result by traversing the solution space. Rand accomplishes a task by randomly assigning a subtask to its nearest worker. The results with different data distributions is shown in Fig.~\ref{fig:single-quality} (a). In all testing, Approx achieves a high quality result which is: 1) close to the optimal result; 2) better than randomized heuristic algorithms. The randomized heuristic algorithm does not offer a deterministic solution, and therefore incurs quality fluctuations.
The gap between Approx and Rand is bigger, if the budget is smaller, as shown in Fig.~\ref{fig:single-quality} (b), which is the essential scenario in TCSC problem.

\begin{figure}[ht]
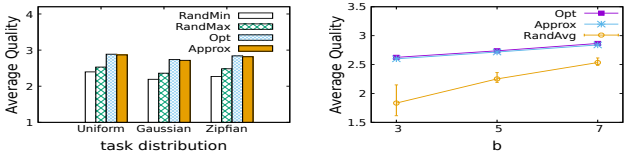

\center
\vspace{-10pt}
\subfigure[Quality vs. Distributions] {\includegraphics[height=0.8in,width=1.7in]{exp/single-budget-5.pdf}}
\subfigure[Quality vs. Budget] {\includegraphics[height=0.8in,width=1.7in]{exp/single-gaussin.pdf}}
\vspace{-5pt}
\caption{Quality of Single-task Case}
\label{fig:single-quality}
\vspace{-5pt}
\end{figure}

We further test the results for the multi-task case in Fig.~\ref{fig:multi-quality} (a-d).
In Fig.~\ref{fig:multi-quality} (a) and (c), it can be observed that the quality of Approx is much better than its competitors, for both $q_{sum}$ and $q_{min}$ cases.
%In Fig.~\ref{fig:multi-quality} (a), the value of $q_{sum}$ of Approx is about three times of that of randomized heuristic algorithms, when tasks are normally distributed.
We also examine how the quality change w.r.t. the budget in Fig.~\ref{fig:multi-assignment} (b) and (d). In all cases, Approx gives much better quality than baselines. The gap between them can be smaller, if the budget is sufficiently large, which is consistent with the problem setting.

In summary, Approx offers a high quality task assignment solution, with a deterministic output and theoretical guarantees, which outperforms the baselines.

\begin{figure*}[ht]
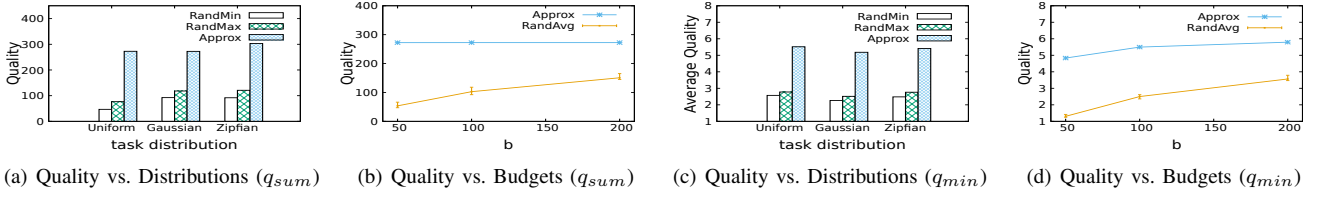

\center
\vspace{-10pt}
\subfigure[Quality vs. Distributions ($q_{sum}$)] {\includegraphics[height=0.8in,width=1.7in]{exp/multiple-sum-budget-100.pdf}}
\subfigure[Quality vs. Budgets ($q_{sum}$)] {\includegraphics[height=0.8in,width=1.7in]{exp/multiple-sum-gaussin.pdf}}
\subfigure[Quality vs. Distributions ($q_{min}$)] {\includegraphics[height=0.8in,width=1.7in]{exp/multiple-min-budget-100.pdf}}
\subfigure[Quality vs. Budgets ($q_{min}$)] {\includegraphics[height=0.8in,width=1.7in]{exp/multiple-min-gaussin.pdf}}
\caption{Quality of Multi-task Case}
\label{fig:multi-quality}
\vspace{-5pt}
\end{figure*}

\subsection{Results on Efficiency}
\label{ret:eff}

\begin{figure*}[ht]
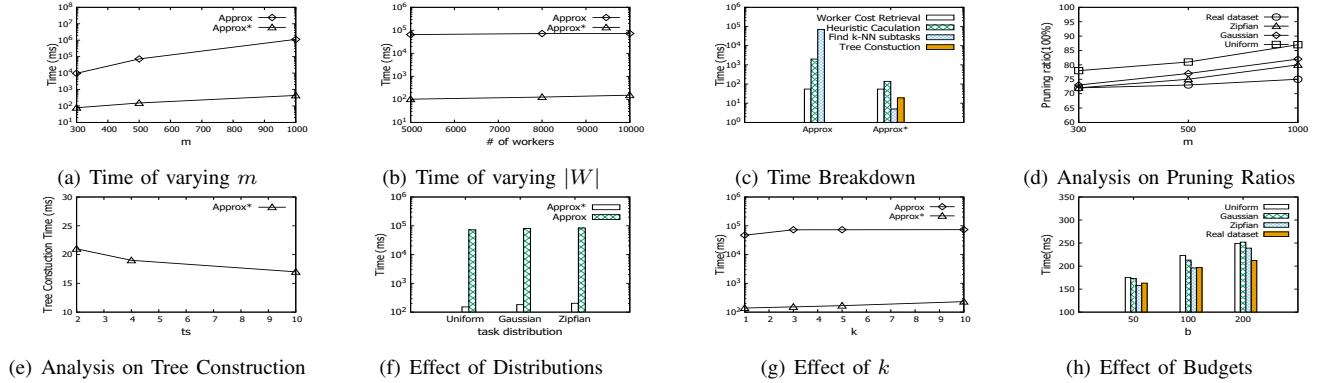

\center
\vspace{-10pt}
\subfigure[Time of varying $m$] {\includegraphics[height=0.8in,width=1.7in]{exp/1-1.pdf}}
\subfigure[Time of varying $|W|$] {\includegraphics[height=0.8in,width=1.7in]{exp/1-2.pdf}}
\subfigure[Time Breakdown]
{\includegraphics[height=0.8in,width=1.7in]{exp/2-1.pdf}}
\subfigure[Analysis on Pruning Ratios] {\includegraphics[height=0.8in,width=1.7in]{exp/2-2.pdf}}
\vspace{-10pt}

\subfigure[Analysis on Tree Construction]
{\includegraphics[height=0.8in,width=1.7in]{exp/2-4.pdf}}
\subfigure[Effect of Distributions] {\includegraphics[height=0.8in,width=1.7in]{exp/1-3.pdf}}
\subfigure[Effect of $k$]
{\includegraphics[height=0.8in,width=1.7in]{exp/1-4.pdf}}
\subfigure[Effect of Budgets] {\includegraphics[height=0.8in,width=1.7in]{exp/2-3.pdf}}
\vspace{-5pt}
\caption{Results of single-task assignment}
\label{fig:single-assignment}
\vspace{-5pt}
\end{figure*}

%\begin{figure}
%  \centering
%  \includegraphics[width = 1.0\columnwidth]{exp/1-1.pdf}
%  \caption{Time of varying $m$  (sQM)}
%  \label{fig:single-m}
%\end{figure}

%We first present the results of single-task assignment with various parameters, and then show the performance of multiple task variants.

We examine the efficiency and scalability of our proposal, by comparing two variants Approx and Approx*.
The Approx solution is described by Algorithm~\ref{alg:single}, but without optimization techniques in Section~\ref{subsec:eff}. Approx* improves Approx by: 1) using tree-structured order-$k$ Voronoi diagrams to avoid redundant $k$-NN pre-computation; 2) using best-first searching and upper bound pruning for identifying the one with largest heuristic value.
We test the efficiency in single-task assignment in Fig.~\ref{fig:single-assignment} and multi-task assignment in Fig.~\ref{fig:multi-assignment}.

{\bfseries Single task assignment.}
First, we test the efficiency of Approx and Approx*, by varying the number of subtasks ($m$) in Fig.~\ref{fig:single-assignment} (a).
Approx* improves over Approx by two orders of magnitude. As $m$ increases, the improvement is more significant.
It shows that the optimization techniques, i.e.,
the tree-structured order-$k$ Voronoi diagram (Section~\ref{subsec:eff}), bring in good scalability to the approximation framework.
%Fig.\ref{fig:single-assignment}(a) shows the results of vary the time slot $m$ of a task. As $m$ increases from 100 to 500, the running time of the assignment increases, because the approximation algorithm needs to complete more rounds, and the number of sub-tasks that can be selected for each round also increases.
%\begin{figure}
%  \centering
%  \includegraphics[width = 1.0\columnwidth]{exp/1-2.pdf}
%  \caption{Time of varying $|W|$  (sQM)}
%  \label{fig:single-w}
%\end{figure}
Second, we test the efficiency by varying the number of workers in Fig.~\ref{fig:single-assignment} (b).
The time cost keeps stable and increases only slightly w.r.t. $|W|$.
The reasons are two-fold: 1) the increasing trend is moderate due to the good scalability of best-first NN searching algorithm; 2) the slight increase shows that, with larger $|W|$, the completion ratios of tasks increase, and are with higher costs. In all cases, Approx* outperforms Approx by at least two orders of magnitude, showing good efficiency and scalability in terms of $m$ and $|W|$.
%The construction of the tree-structure takes a little extra cost. Given the remarkable efficiency gained, the cost is well-spent, though.
%As $|W|$ increases, the total time cost is stable w.r.t.
%on finding nearest workers increases. The increasing trend is moderate due to the good scalability of the best-first searching algorithm. In all cases, Approx* outperforms Approx by at least two orders of magnitude.
%\begin{figure}
%  \centering
%  \includegraphics[width = 1.0\columnwidth]{exp/1-3.pdf}
%  \caption{Time of varying distributions  (sQM)}
%  \label{fig:single-distribution}
%\end{figure}
%\begin{figure}
%  \centering
%  \includegraphics[width = 1.0\columnwidth]{exp/1-4.pdf}
%  \caption{Time of varying $k$  (sQM)}
%  \label{fig:single-k}
%\end{figure}

To understand how the efficiency is achieved, we make detailed analysis in Fig.~\ref{fig:single-assignment} (c-e).
The improvements made by Approx* are in two parts, as shown in Fig.~\ref{fig:single-assignment} (c). First, Approx* utilizes the implementation of the approximation of order-$k$ Voronoi diagram, and thus maximally reuses the computation of $k$-NN results. It can be observed that the cost of the interpolation (i.e., finding $k$ nearest subtasks) can be reduced by $4$ orders of magnitude.
Second, the tree-based pruning techniques can further reduce the cost of heuristic value calculation by more than an order of magnitude. %It is consistent with our theoretical analysis in Section~\ref{subsec:eff}.
The little extra cost for Approx* on the tree-structure is well-spent, given
the efficiency gained in the total execution time.

To examine the pruning effects supported by the tree structure, we report the pruning ratios, by varying $m$ on different task distributions in Fig.~\ref{fig:single-assignment} (d).
The ratio is calculated by the dividing the number of slots executed with pruning (Section~\ref{subsec:eff}) by the one without pruning.
It can be observed that our methods prune away more than 70\% subtask execution and therefore effectively accelerate the entire task processing. Similar trend is observed for the result on the real data.
We report the time cost spent on the tree-structure construction by varying the value of fanout of the tree structure, $t_s$, in Fig.~\ref{fig:single-assignment} (e).  In all testing, the construction time is no more than $25$ ms.
Also, the time decreases w.r.t. the increase of $t_s$, since a larger $t_s$ corresponds to a smaller number of tree nodes and therefore less construction time.

We continue to examine the effects of other factors on the efficiency.
In Fig.~\ref{fig:single-assignment} (f), we compare the two solutions by varying the distributions of tasks' locations.
In all cases, the performance of Approx* dominates that of Approx by more than two orders of magnitude.
More, the time cost of Approx* remains relatively stable with tasks' location distributions.
We also test the effect of parameter $k$ for data interpolation, in Fig.~\ref{fig:single-assignment} (g).
The time cost increases with $k$, since the cost of $k$-NN interpolation is higher for a bigger $k$.
%The performance of Approx* dominates that of Approx.
%Two orders of magnitude better efficiency is achieved by Approx*.
We study the effect of budgets in Fig.~\ref{fig:single-assignment} (h).
 The time cost increases moderately w.r.t. $b$, since the number of executed subtasks also increases w.r.t. $b$.
Zipfian distribution has the lowest construction time.
A task tends to incur higher cost under skewed distributions, so that
the number of executed subtasks is reduced and the corresponding time cost is less. In summary, Approx* dominates Approx in different parameter settings, and has better adaptivity to the skewness of data distributions.

\begin{figure*}[ht]
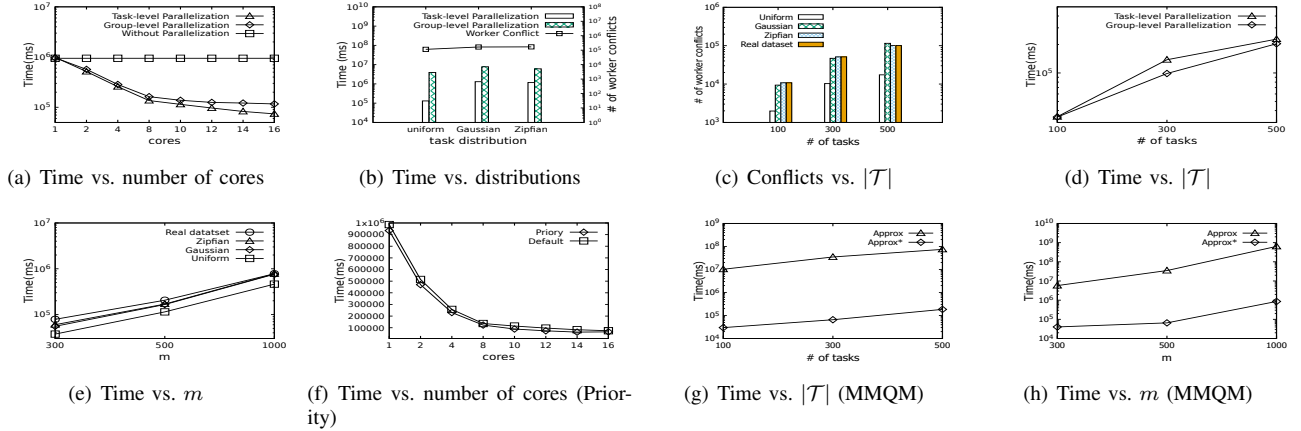

\center
%\vspace{-5pt}
\subfigure[Time vs. number of cores] {\includegraphics[height=0.8in,width=1.7in]{exp/6-a.pdf}}
\subfigure[Time vs. distributions] {\includegraphics[height=0.8in,width=1.7in]{exp/6-b.pdf}}
\subfigure[Conflicts vs. $|\mathcal{T}|$] {\includegraphics[height=0.8in,width=1.7in]{exp/4-1.pdf}}
\subfigure[Time vs. $|\mathcal{T}|$] {\includegraphics[height=0.8in,width=1.7in]{exp/6-d.pdf}}
%\vspace{-5pt}
\subfigure[Time vs. $m$] {\includegraphics[height=0.8in,width=1.7in]{exp/6-f.pdf}}
\subfigure[Time vs. number of cores (Priority)] {\includegraphics[height=0.8in,width=1.7in]{exp/6-e.pdf}}
\subfigure[Time vs. $|\mathcal{T} |$ (MMQM)] {\includegraphics[height=0.8in,width=1.7in]{exp/5-1.pdf}}
\subfigure[Time vs. $m$  (MMQM)] {\includegraphics[height=0.8in,width=1.7in]{exp/5-2.pdf}}
%\vspace{-3pt}
\caption{Results of multi-task assignment}
\label{fig:multi-assignment}
%\vspace{-10pt}
\end{figure*}

{\bfseries Multiple task assignment.}
We provide the results on the summation quality case in Fig.~\ref{fig:multi-assignment} (a-f)
 and the results on the minimum quality case in Fig.~\ref{fig:multi-assignment} (g-h).

First, we compare the performance of the three variants, group-based parallelization, task-based parallelization, and the basic solution without parallelization in Fig.~\ref{fig:multi-assignment} (a).
It shows that parallelization with smaller granularity achieves better scaleup.
%Both solutions with parallelization outperform the basic solution. Among the three, the task-based parallelization performs the best.
The task-based parallelization outperforms the other two.
In particular, when the number of cores reaches $10$, the task-based parallelization solution takes about an order of magnitude less running time than the basic solution, which is consistent with the analysis in Section~\ref{subsec:maxsum}.
We then compare the two parallelization methods by varying task location distributions in Fig.~\ref{fig:multi-assignment} (b).
We can see that the Gaussian and Zipf distributions incur higher costs.
It is because that skewed datasets tend to incur larger numbers of worker conflicts. The cost increases moderately because of the good scalability achieved by the optimization techniques for indexing and scheduling.
Also, the number of worker conflicts increases with the number of tasks, as reported in Fig.~\ref{fig:multi-assignment} (c). %Compared with group-based parallelization, the task-based parallelization saves many costs.

%As discussed earlier, we charge the independence of each task, and apply parallel technique to task assignment process to achieve better performance.The Fig.\ref{fig:multi-assignment_maxsum}(a) shows the results of varying cores. The main idea of Without-Parallelization is to treat all the tasks as a whole to allocate them, which does not involve parallel strategy. Therefore, changing cores has no impact on it.While for Group-level parallelization, the running time mainly depends on the size of a group. It can be seen that, while the size of each group is close, the running time decreases while cores increases, and it tends to be stable around 8. However, the whole algorithm will slow down due to the existence of a large group, in this condition, the running time of the entire algorithm is determined by the running time of the largest group, and changing cores may have no effect on the running time.
 %As for Task-level parallelization, increasing cores will decrease the running time, and it tends to be stable around 12.
%We can observe that the running time of Task-level parallelization is significantly less than Group-level parallelization under each distribution. When comparing different distributions, tasks with location satisfying the gaussian and zipf distribution take longer time to execute because there exists more worker conflicts compared to uniform distribution(as shown in the Fig.\ref{fig:multi-assignment_tasklevel}(a)). The worker conflicts will make the efficiency of task-level parallelization worse.
We then report the scalability of our proposal w.r.t. the number of tasks in Fig.~\ref{fig:multi-assignment} (d). It can be observed that the task-based parallelization solution increases moderately w.r.t. the number of tasks, and grows only slowly. We examine the performance of the algorithm by varying the parameter $m$, in Fig.~\ref{fig:multi-assignment} (e).
All methods increase moderately, where Zipfian and Gaussian distributions take longer than the uniform case. The result is consistent with our analysis, because skewed task distributions have a larger chance for incurring worker conflicts.
We then make more analysis of the task-based parallelization method. We test the effect of thread priority setting in Fig.~\ref{fig:multi-assignment} (f).
It reflects that threads with lower heuristic values are scheduled to execute earlier by the priority adjustment module, which breaks the ties of blocked threads and improves the performance of parallelization.

We show the results on the minimum quality metric in Fig.~\ref{fig:multi-assignment} (g) and (h).
%For multiple-task assignment with max min quality metrics as the optimization target, we compare the performance of Approx and Approx* on in Fig.~\ref{fig:multi-assignment} (g) and (h).
First, we examine the time costs by varying $|\mathcal{T}|$ in Fig.~\ref{fig:multi-assignment} (g).
It can be observed that the time cost increases w.r.t. the number of tasks. 
%The time costs for calculating the heuristic value is higher if the number of tasks is larger.
Second, we test the result by varying $m$ in Fig.~\ref{fig:multi-assignment} (h).
It can be observed that the running time increases as $m$ increases. %A larger $m$ corresponds to a TCSC task consisting of more subtasks, resulting in higher costs.
In both experiments, Approx* steadily outperforms Approx, demonstrating better scalability in terms of the number of tasks and subtasks.
%For example, when $|\mathcal{T}|=300$, Approx* is two orders of magnitude faster, as shown in Fig.~\ref{fig:multi-assignment} (g).

%In the end, we show the results of max min Quality. The Fig.\ref{fig:multi-assignment_maxmin}(a) shows the results of varying $|\mathcal{T}|$, as $|\mathcal{T}|$ increases, the running time will increase. Each round selects the task with the lowest task quality to complete one of the sub-tasks. The more the number of tasks, the more the number of times the task quality needs to be calculated, and the number of rounds of execution will also increase. We also compare the performance of {\it Approximation} and {\it Approximation*}, obviously, the efficiency of {\it Approximation*} is higher than {\it Approximation}.
% The Fig.\ref{fig:multi-assignment_maxmin}(b) shows the results of varying $m$, as $m$ increases, the running time will also increase. Since our budget is defined by the percentage of executed subtasks, the larger $m$ is, the more subtasks needs to be executed and the longer the running time.

%\begin{figure}[!t]
%\centering
%\subfigure[Time of varying $|\mathcal{T}|$] {\includegraphics[height=1in,width=1.7in]{exp/5-1.pdf}}
%\subfigure[Time of varying $m$] {\includegraphics[height=1in,width=1.7in]{exp/5-2.pdf}}
%\caption{Results of max min multi-task assignment}
%\label{fig:multi-assignment_maxmin}
%\end{figure}

%{\bfseries Scalability}: We show the results by increasing the number of tasks and workers at the same scale and compare the running time and memory costs.

%\vspace{-5pt}
\section{Related Work}
\label{sec:related}
%\vspace{-5pt}
%There are many previous studies in spatial crowdsourcing, which requires workers traveling to locations of spatial tasks and performs tasks, such as taking photos/videos, repairing a house, and waiting in line at shopping malls. These works focused on assigning available workers to tasks with distinct goals, such as maximizing the number of assigned tasks\cite{kazemi2012geocrowd} \cite{to2015server}, minimizing the total cost of all workers \cite{deng2015task}, or maximizing the quality score \cite{cheng2015reliable} \cite{to2015server}.

There are many studies in spatial crowdsourcing, requiring workers traveling to locations of spatial tasks and performs tasks, such as taking photos/videos, repairing a house, and waiting in line at shopping malls. These works focused on assigning available workers to tasks with distinct goals, such as maximizing the number of assigned tasks\cite{kazemi2012geocrowd} \cite{to2015server} \cite{DBLP:journals/pvldb/ChengJC18}, minimizing the total travel costs of all workers \cite{deng2015task} \cite{DBLP:journals/tetc/ZhangYLT19}, maximizing the quality score \cite{cheng2015reliable} \cite{to2015server} \cite{DBLP:journals/csur/DanielKCBA18}, maximizing quality task assignment by considering
both present and future workers/tasks \cite{DBLP:conf/icde/ChengLCS17}, or minimizing maximum task assignment delay \cite{DBLP:conf/icde/ChenCZC19}.
Existing works cannot be directly used to handle the quality issues of the TCSC problem.
According to our comprehensive survey \cite{todsurvey}, TCSC is related to the categories of data collection, and of task matching with quality constraints.
We thus review existing works in the two categories.

Regarding applications of data collection, there exist papers on floorplan generation~\cite{Alzantot2012a}, traffic anomalies detection~\cite{Pan2013}, voluntary services \cite{wuv}, and geo-spatial linked open data postprocessing~\cite{Karam2013b}, etc.
They mostly collect data in a specified spatiotemporal context, and do not address the issues in long-term data acquisition.
Our work can support extending these works for the continuous data acquisition, e.g., monitoring routing behaviours, by incorporating the quality-aware crowdsourcing framework.

Regarding quality constraints, most existing papers are on the quality of task responses, based on the workers' expertise, reputation, or reliability \cite{DBLP:journals/pvldb/LiuLOSWZ12}\cite{cheng2016task}\cite{cheng2015reliable} \cite{DBLP:conf/gis/KazemiSC13}.
They usually involve a pre-task qualification test \cite{DBLP:journals/pvldb/LiuLOSWZ12}, or the assignment based on the expertise \cite{cheng2016task}, or abilities \cite{cheng2015reliable} \cite{DBLP:conf/gis/KazemiSC13} of the worker. These papers are similar to the TCSC problem in the sense that they require data aggregation from multiple workers, but the aggregation methods are totally different.
To our best knowledge, the most relevant work is \cite{cheng2015reliable}, which considers the diversity (or distribution) of spatial and temporal tasks.
%Entropy functions are used for representing the quality of task completion.
Differently, they do not consider the mutual interaction between the interpolated and crowdsourced data. So, the optimization target and corresponding techniques are totally different.
%Regarding the number of required workers, a spatial task can be assigned to either one or more workers. While assigning spatial tasks to multiple workers, the task quality can be improved by collecting more answers \cite{DBLP:conf/gis/KazemiSC13}. The similarity with our work is that both methods assign more than one worker to complete the task. The TCSC task quality is determined by the task finishing rates and distributions,%, and there is no situation where a subtask is performed by multiple workers at the same time.
%while previous works on the task quality mostly focus on pre-task qualification test \cite{DBLP:journals/pvldb/LiuLOSWZ12}, or the assignment based on the expertise \cite{cheng2016task}, or abilities \cite{cheng2015reliable} \cite{DBLP:conf/gis/KazemiSC13} of the worker.

To summarize, a TCSC task has a temporally continuous nature, and requires time-sharing collaboration of multiple workers, necessitating quality-aware data management.
%the quality-aware task assignment requires multiple workers.
%belongs to the complex task and requires multiple workers. Also, we consider the time continuous nature and the quality of a task in the assignment.
%Therefore, existing works cannot be directly applied to our problem.
%Therefore, we put forward a metric to measure the task quality and design approximation algorithms, pruning techniques, and parallelization methods for the task assignment optimization. %We assign tasks with a goal, that is, maximizing the task quality of assignments under the given budget. As the tasks are defined differently, we cannot directly apply previous techniques to tackle our problems. 
%\vspace{-5pt}
\section{Conclusion}
\label{sec:conclusion}
%\vspace{-5pt}
In this paper, we study the problem of TCSC, which enables time-sharing collaboration among multiple workers towards long-term continuous spatial crowdsourcing applications. We propose an entropy-based quality metric for measuring the incompleteness of the crowdsourced results. Based on that, we study quality-aware task assignment algorithms with budget constraints for both single- and multi-task cases.
%For the single-task case, we prove that the problem is NP-hard and devise algorithms with guaranteed approximation ratios. We further investigate novel indexing and pruning techniques for accelerating the process. For the multi-task case, we consider two variants, summation and minimum quality optimization with budget constraints.
For both variants, we prove its NP-hardness and submodularity of quality functions, so that a unified approximation framework can be applied. We devise novel indexing and parallel mechanisms for accelerating the processing.
Extensive experiments on real and synthetic datasets show that our proposals achieve good efficiency and scalability.
In the future, we will extend the approximation framework and optimization techniques from supporting temporal interpolation to spatiotemporal interpolation scenarios.
%considers tasks that take a long time to finish and involves time-sharing among multiple workers. We first assign tasks in the single-task scenario, and propose approximation algorithms and optimization strategies to enhance the efficiency of the single-task scenario. We further propose the task assignment of different variants in the multi-task scenario and design a parallel framework to deal with the problem of worker conflict among multi-tasks. We show through extensive experiments that our proposals are efficient. 

%\section*{Acknowledgments}
%This work is partially supported by NSFC (No. 61772492, 62072428, 61672479, 62072427, 61872330, 61572457), Anhui NSF Distinguished Young Scholarship (No. 1908085J24), ARC DE190100663, and the CAS Pioneer Hundred Talents Program. We also thank anonymous reviewers for their insightful comments on crowdsourcing with spatiotemporal interpolation. Xike Xie is the corresponding author.

\section*{Acknowledgments}
We thank anonymous reviewers for their insightful comments on crowdsourcing with spatiotemporal interpolation.

%%\vspace{-5pt}
\bibliographystyle{IEEEtran}
\bibliography{ms}
%%\vspace{-5pt}

\balance

\newpage

%\begin{appendix}

%%%\vspace{-5pt}
\section{Appendix}

%~~~~~~{\bf (To be included in an online technical report)}

\subsection{Submodularity and Non-decreasingness of $p^{(j)}$}

\label{sec:sub_non}
{%\small
We show Lemmas~\ref{lem:psub} and \ref{lem:pinc}, which are on the properties of finishing probability functions.
Assume a set $S$ of executed subtasks, and a to-be-executed subtask $e$, satisfying $S \cap \left\{e\right\} = \emptyset$.
We define $p^S(\tau^{(j)})$ as the finishing probability of subtask $\tau^{(j)}$ given that the subtasks in $S$ are executed.

Similarly, we define $\rho_{err}^S(\tau^{(j)})$ and $I^S(\tau^{(j)})$ as the error ratio and the interpolation distance (i.e., $I^S(\tau^{(j)}) =\sum_{e \in S_{KNN}} |\tau^{(j)}, e|_i$) of subtask $\tau^{(j)}$, respectively, given that all subtasks in $S$ are executed.
%j$I^S(\tau^{(j)})$ as the interpolation distance of subtask $\tau^{(j)}$ with the executed subsets $S$.
Without causing any ambiguities, in the proofs, we simplify $\rho_{err}^S(\tau^{(j)})$ and $I^S(\tau^{(j)})$ as $\rho_{err}^S$ and $I^S$, respectively.
%(We simplify $\rho_{err}^S(\tau^{(j)})$ to $\rho_{err}^S$, and $I^S(\tau^{(j)})$ to $I^S$.)
%%\vspace{-5pt}
\begin{lemma}
\label{lem:psub}
The function $p^{(j)}$ is submodular.
\end{lemma}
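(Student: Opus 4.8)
The plan is to argue via the equivalent diminishing-returns characterization of submodularity rather than directly from the intersection-union inequality of Definition~\ref{submodular}: it suffices to show that for any two sets of executed subtasks $A \subseteq B$ and any candidate subtask $e \notin B$,
\[
p^{A \cup \{e\}}(\tau^{(j)}) - p^A(\tau^{(j)}) \ \ge\ p^{B \cup \{e\}}(\tau^{(j)}) - p^B(\tau^{(j)}).
\]
First I would reduce this statement about $p$ to one about the interpolation distance $I^S$. Combining Equations~\ref{eqn:p} and~\ref{eqn:rho} gives $p^S(\tau^{(j)}) = \frac{1}{m} - \frac{1}{k m^2}\, I^S(\tau^{(j)})$, which is an affine and strictly decreasing function of $I^S$. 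Since adding a constant and scaling by a positive factor preserve the sub/supermodular type (up to reversing it for the negative sign), the desired inequality for $p$ is equivalent to the reversed, supermodular, inequality for the interpolation distance, namely $I^{A \cup \{e\}} - I^A \le I^{B \cup \{e\}} - I^B$.

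Next I would compute the marginal change of $I^S$ explicitly. Writing $d_e = |\tau^{(j)}, e|_i$ for the distance from $\tau^{(j)}$ to $e$, and letting $d_k^S$ denote the $k$-th smallest distance from $\tau^{(j)}$ to the elements of $S$ (under the padding convention that treats a missing neighbor as being at distance $m$ when $|S| < k$), a short case analysis shows that $e$ enters the $k$-nearest set precisely when $d_e < d_k^S$, in which case it displaces the current $k$-th nearest neighbor and otherwise leaves $I^S$ unchanged. This yields the uniform formula
\[
I^{S \cup \{e\}} - I^S \ =\ -\max\{0,\ d_k^S - d_e\}.
\]

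The crux of the argument is then the monotonicity of $d_k^S$ in $S$: enlarging the executed set can only bring the $k$-th nearest neighbor closer (or leave it fixed), so $A \subseteq B$ implies $d_k^A \ge d_k^B$. Given this, $d_k^A - d_e \ge d_k^B - d_e$ gives $\max\{0, d_k^A - d_e\} \ge \max\{0, d_k^B - d_e\}$, hence $I^{A\cup\{e\}} - I^A \le I^{B\cup\{e\}} - I^B$, which is exactly the supermodularity of $I^S$ and therefore the submodularity of $p^{(j)}$. I expect the main obstacle to be purely bookkeeping: handling the padding convention carefully so that both the displaced-neighbor marginal formula and the monotonicity $d_k^A \ge d_k^B$ remain valid across the boundary cases where $|S| < k$ and some neighbors are still padded at distance $m$. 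These checks are routine but must be stated cleanly, since the entire reduction rests on the marginal identity holding uniformly in $|S|$.
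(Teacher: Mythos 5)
Your proof is correct, but it takes a genuinely different route from the paper's. The paper verifies the inequality of Definition~\ref{submodular} directly for the pair $(S,\{e\})$ with $S \cap \{e\} = \emptyset$, by an exhaustive case analysis on $|S|$ (namely $S=\emptyset$, $0<|S|<k$, $|S|\geq k$) and on whether $e$ displaces a current $k$-nearest neighbor; your displaced-neighbor identity $I^{S\cup\{e\}} - I^S = -\max\{0,\, d_k^S - d_e\}$ appears there only implicitly, in the subcase where the paper writes $I^{S\cup\{e\}} = I^S - I^{\{e'\}} + I^{\{e\}}$. Your route buys two things. First, by folding the padding convention (missing neighbors at distance $m$) into $I^S$ itself, you make $p^S = \frac{1}{m} - \frac{I^S}{km^2}$ a single affine decreasing function of $I^S$ valid for every $|S|$, which collapses the paper's case distinctions; the paper instead keeps padding outside $I^S$ and carries correction terms of the form $1-\frac{|S|}{k}$ through each case. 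Second, and more substantively, you establish the full diminishing-returns property for arbitrary $A \subseteq B$ via the monotonicity $d_k^A \geq d_k^B$, whereas the paper's four-term inequality, with one argument restricted to a disjoint singleton, amounts to comparing marginal gains only against the empty set (i.e., $p^{S\cup\{e\}} - p^S \leq p^{\{e\}} - p^{\emptyset}$), which is strictly weaker than submodularity in general. In that sense your argument is not merely an alternative but actually supplies the quantification over set pairs that the paper's verification leaves implicit. The boundary bookkeeping you flag is indeed routine: real inter-slot distances are at most $m-1 < m$, so padded entries are never cheaper than real ones, and both the marginal identity and $d_k^A \geq d_k^B$ hold uniformly across the $|S|<k$ boundary.
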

%%\vspace{-10pt}
\begin{proof}
%jWe prove the finishing probability function $p^{(j)}$ is submodular by proving the error rate function $\rho_{err}(\tau^{(j)})$ is submodular. The finishing probability function $p^{(j)}$ is submodular if:
To prove function $p^{(j)}$ is submodular, it is equivalently to prove the following.
\begin{equation}
\label{proof:pj}
p^{ S \cap \left\{ e \right\} } (\tau^{(j)}) +  p^{S \cup \left\{ e \right\}} (\tau^{(j)}) \leq p^S (\tau^{(j)}) + p^{ \left\{ e \right\} } (\tau^{(j)})
\end{equation}
By substituting it with Equation~\ref{eqn:p}, we can rewrite it as $\rho_{err}^{ S \cap \left\{ e \right\} }  +  \rho_{err}^{S \cup \left\{ e \right\}}  \geq \rho_{err}^S  + \rho_{err}^{ \left\{ e \right\} }
$.% follows.
%\begin{equation}
%\rho_{err}^{ S \cap \left\{ e \right\} }  +  \rho_{err}^{S \cup \left\{ e \right\}}  \geq \rho_{err}^S  + \rho_{err}^{ \left\{ e \right\} }
%\end{equation}
Equivalently, it is sufficient to prove that
%$\rho_{err}^{ S \cap \left\{ e \right\} }  +  \rho_{err}^{S \cup \left\{ e \right\}} - \rho_{err}^S  - \rho_{err}^{ \left\{ e \right\} } \geq 0$.
\begin{equation}
\label{pf:rho}
\rho_{err}^{ S \cap \left\{ e \right\} }  +  \rho_{err}^{S \cup \left\{ e \right\}} - \rho_{err}^S  - \rho_{err}^{ \left\{ e \right\} } \geq 0
\end{equation}
From Equation~\ref{eqn:rho}, we know that $\rho_{err}$'s value is dependent on the $\tau^{(j)}$'s $k$-NN set, $S_{KNN}$, which must be a subset of $S$.
Next, we show the correctness of Equation~\ref{pf:rho}, by enumerating all three possible case of $S$.
%The set $S$ has three situations, respectively, $S = \emptyset$, $0<|S|<k$, $S \geq k$, and we prove equation \ref{pf:rho} from these three aspects.

{\it Case} 1.  When $S = \emptyset$, we have
%$\rho_{err}^{ S \cap \left\{ e \right\} } = \rho_{err}^S $, $ \rho_{err}^{S \cup \left\{ e \right\}} = \rho_{err}^{ \left\{ e \right\} }$,
$\rho_{err}^{ S \cap \left\{ e \right\} }  +  \rho_{err}^{S \cup \left\{ e \right\}} - \rho_{err}^S  - \rho_{err}^{ \left\{ e \right\} } = 0$, and thus Equation \ref{pf:rho} holds.

{\it Case} 2.  When $0<|S|<k$,  we have
$\rho_{err}^{ S \cap \left\{ e \right\} } = 1$ and $\rho_{err}^{ \left\{ e \right\} } = 1 - \frac{1}{k} + \frac{I^{ \left\{ e \right\} }}{km}$.
Here, $\rho_{err}^{S \cup \left\{ e \right\}}$ has two subcases, depending on $|S|$.

The subcase (a) refers to $|S|=k-1$. In that case, there are $k$ finished subtasks after executing $e$. Then, $ \rho_{err}^{S \cup \left\{ e \right\}} = \frac{I^{S \cup \left\{ e \right\}}}{km}$, and $\rho_{err}^S = \frac{I^S}{km} +  \frac{1}{k}$. We can have $\rho_{err}^{ S \cap \left\{ e \right\} }  +  \rho_{err}^{S \cup \left\{ e \right\}} - \rho_{err}^S  - \rho_{err}^{ \left\{ e \right\} } =  \frac{I^{S \cup \left\{ e \right\}}}{km} - \frac{I^S}{km} - \frac{I^{\left\{ e \right\}}}{km} = 0$.
%As $I^{S \cup \left\{ e \right\}} = I^S + I^{\left\{ e \right\}}$, the equation \ref{pf:rho} $ = 0$.
Equation~\ref{pf:rho} holds.

The subcase (b) means that the total number of finished subtasks does not exceed $k$, after executing subtask $e$. Thus, we have $ \rho_{err}^{S \cup \left\{ e \right\}} = 1 - \frac{|S|+1}{k} + \frac{I^{S \cup \left\{ e \right\}}}{km}$ and $\rho_{err}^S = 1 - \frac{|S|}{k} + \frac{I^S}{km}$.
Then, $\rho_{err}^{ S \cap \left\{ e \right\} }  +  \rho_{err}^{S \cup \left\{ e \right\}} - \rho_{err}^S  - \rho_{err}^{ \left\{ e \right\} } = \frac{I^{S \cup \left\{ e \right\}}}{km}  - \frac{I^S }{km} - \frac{I^{\left\{ e \right\}}}{km}  = 0$. Equation \ref{pf:rho} holds.

{\it Case} 3. When $|S| \geq k$,  $\rho_{err}^{ S \cap \left\{ e \right\} } = 1$, we have $\rho_{err}^{ \left\{ e \right\} } = 1 - \frac{1}{k} + \frac{I^{ \left\{ e \right\} }}{km}$, $\rho_{err}^S = \frac{I^S}{km}$, and $ \rho_{err}^{S \cup \left\{ e \right\}} = \frac{I^{S \cup \left\{ e \right\}}}{km} $.
Based on whether the execution of subtask $e$ changes $S_{KNN}$ ($\tau^{(j)}$'s $k$-NN set), there can be two subcases.

If $S_{KNN}$ is not affected by $e$, we have that $ \rho_{err}^{S \cup \left\{ e \right\}} =\rho_{err}^S$. So, $\rho_{err}^{ S \cap \left\{ e \right\} }  +  \rho_{err}^{S \cup \left\{ e \right\}} - \rho_{err}^S  - \rho_{err}^{ \left\{ e \right\} } = \rho_{err}^{ S \cap \left\{ e \right\} } - \rho_{err}^{ \left\{ e \right\} } =  \frac{1}{k} - \frac{I^{ \left\{ e \right\} }}{km}$.
As the interpolation distance $I^{ \left\{ e \right\} }$ is less than $m $, we have $\frac{1}{k} - \frac{I^{ \left\{ e \right\} }}{km} >0$.
So, Equation \ref{pf:rho} holds.

If $S_{KNN}$ is affected by $e$, it implies that a subtask in $S_{KNN}$ is updated by $e$. Suppose the replaced subtask in original $S_{KNN}$ be $e'$, and the updated interpolation distance be $ I^{S \cup \left\{ e \right\}} = I^S - I^{\left\{ e' \right\}} + I^{ \left\{ e \right\} }$.
We can have $\rho_{err}^{ S \cap \left\{ e \right\} }  +  \rho_{err}^{S \cup \left\{ e \right\}} - \rho_{err}^S  - \rho_{err}^{ \left\{ e \right\} } = \frac{I^{S \cup \left\{ e \right\}}}{km} - \frac{I^S}{km} + \frac{1}{k} - \frac{I^{ \left\{ e \right\} }}{km} =  \frac{1}{k} - \frac{I^{ \left\{ e' \right\} }}{km} $. As the interpolation distance $I^{ \left\{ e' \right\} }$ is less than $m$, we have $\frac{1}{k} - \frac{I^{ \left\{ e' \right\} }}{km} >0$. Equation \ref{pf:rho} holds.

In summary, Equation \ref{pf:rho} holds in all three cases. The lemma is proved.
\end{proof}
%%\vspace{-10pt}
\begin{lemma}
\label{lem:pinc}
The function $p^{(j)}$ is non-decreasing.
\end{lemma}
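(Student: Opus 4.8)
The plan is to reduce the monotonicity of $p^{(j)}$ to the anti-monotonicity of the error ratio $\rho_{err}$. By Equation~\ref{eqn:p}, the finishing probability $p^{(j)} = \frac{1}{m}\bigl(1 - \rho_{err}(\tau^{(j)})\bigr)$ is an affine, strictly decreasing function of $\rho_{err}$. Hence $p^{(j)}$ is non-decreasing in the set of executed subtasks if and only if $\rho_{err}$ is non-increasing. Concretely, for any set $S$ of executed subtasks and any additional executed subtask $e \notin S$, it suffices to prove $\rho_{err}^{S \cup \{e\}}(\tau^{(j)}) \leq \rho_{err}^{S}(\tau^{(j)})$. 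Intuitively, executing one more subtask can only supply a (weakly) closer neighbor for the $k$-NN interpolation, so the averaged neighbor distance cannot grow.

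First I would reuse the exact case partition and the closed-form expressions for $\rho_{err}^{S}$ already derived in the proof of Lemma~\ref{lem:psub}, splitting on $|S|$ relative to $k$ and recalling the padding convention that each missing neighbor contributes the maximal distance $m$. The low-cardinality regimes are routine. When $S = \emptyset$ we have $\rho_{err}^{\emptyset} = 1$, the maximum possible value, so any $\rho_{err}^{\{e\}} \leq 1$ holds trivially. When $0 < |S| < k$, adding $e$ either fills the last padded slot (the subcase $|S| = k-1$) or still leaves fewer than $k$ neighbors; in both subcases the difference $\rho_{err}^{S} - \rho_{err}^{S \cup \{e\}}$ simplifies to $\frac{1}{k} - \frac{I^{\{e\}}}{km}$, which is non-negative because the single interpolation distance $I^{\{e\}} = |\tau^{(j)}, e|_i$ is at most $m$.

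The main obstacle is the saturated regime $|S| \geq k$, where the $k$-NN set $S_{kNN}(\tau^{(j)})$ is already full and adding $e$ may or may not alter it. I would split this into two subcases. If $e$ lies no closer than the current $k$-th nearest neighbor, then $S_{kNN}$ is unchanged, so $I^{S \cup \{e\}} = I^{S}$ and equality holds. If instead $e$ displaces some existing neighbor $e'$, then $e$ is by construction at least as close to $\tau^{(j)}$ as $e'$, giving $I^{S \cup \{e\}} = I^{S} - I^{\{e'\}} + I^{\{e\}} \leq I^{S}$ since $I^{\{e\}} \leq I^{\{e'\}}$. In either subcase $\rho_{err}$ weakly decreases. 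Collecting all cases yields $\rho_{err}^{S \cup \{e\}} \leq \rho_{err}^{S}$, and hence the non-decreasingness of $p^{(j)}$.
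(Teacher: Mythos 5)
Your proposal is correct and follows essentially the same route as the paper's own proof: reducing monotonicity of $p^{(j)}$ to the non-increasingness of $\rho_{err}$, then splitting on $|S|<k$ (with the $|S|=k-1$ versus $|S|<k-1$ subcases) and $|S|\geq k$ (with the $k$-NN set unchanged versus a neighbor $e'$ displaced, using $I^{\{e\}} \leq I^{\{e'\}}$). The only cosmetic difference is that you treat $S=\emptyset$ as a separate trivial case, whereas the paper folds it into the $|S|<k$ analysis.
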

%%\vspace{-10pt}
\begin{proof}
We prove the finishing probability function $p^{(j)}$ is non-decreasing by showing that the error rate function $\rho_{err}(\tau^{(j)})$ is non-increasing. Or, equivalently, %$\rho_{err}^{S \cup \left\{ e \right\}} - \rho_{err}^S \leq 0$.
%The error rate function $\rho_{err}(\tau^{(j)})$ is non-increasing if
\begin{equation}
\label{pf:non_increase}
\rho_{err}^{S \cup \left\{ e \right\}} - \rho_{err}^S \leq 0
\end{equation}
There can be two possible cases for set $S$, $0 \leq |S|<k$ and $|S| \geq k$. We hereby prove the correctness of Equation~\ref{pf:non_increase} by considering the two cases.

{\it Case} 1. When $0 \leq |S|<k$,  $ \rho_{err}^{S \cup \left\{ e \right\}}$ have two subcases, depending on the size of set $S$.

The first subcase is for $|S|=k-1$, and thus the total subtasks number is $k$ after adding the executed subtask $e$. Then, $ \rho_{err}^{S \cup \left\{ e \right\}} = \frac{I^{S \cup \left\{ e \right\}}}{km}$, and $\rho_{err}^S = \frac{I^S}{km} +  \frac{1}{k}$. So, we have $\rho_{err}^{S \cup \left\{ e \right\}} - \rho_{err}^S = \frac{ I^{S \cup \left\{ e \right\}}}{km} - \frac{I^S}{km} - \frac{1}{k} = \frac{I^{\left\{ e \right\}}}{km} - \frac{1}{k}$. As the interpolation distance $I^{ \left\{ e \right\} }$ is less than $m $,  Equation \ref{pf:non_increase} holds.

The second subcase is for $|S| < k-1$, meaning that the total number of subtasks is less than $k$ after the execution of $e$.
We can thus have $ \rho_{err}^{S \cup \left\{ e \right\}} = 1 - \frac{|S|+1}{k} + \frac{I^{S \cup \left\{ e \right\}}}{km}$, and  $\rho_{err}^S = 1 - \frac{|S|}{k} + \frac{I^S}{km}$. So, $ \rho_{err}^{S \cup \left\{ e \right\}} - \rho_{err}^S  = \frac{I^{S \cup \left\{ e \right\}}}{km}  - \frac{I^S}{km}  - \frac{1}{k} = \frac{I^{\left\{ e \right\}}}{km} - \frac{1}{k}$. Equation \ref{pf:non_increase} holds.

{\it Case} 2. When $S \geq k$, we can have $\rho_{err}^S = \frac{I^S}{km}$ and $ \rho_{err}^{S \cup \left\{ e \right\}} = \frac{I^{S \cup \left\{ e \right\}}}{km} $.
Based on whether the execution of subtask $e$ changes $S_{KNN}$, there can be two subcases.

If $S_{KNN}$ is not affected by $e$, we have $ \rho_{err}^{S \cup \left\{ e \right\}} =\rho_{err}^S$, so that Equation \ref{pf:non_increase} holds.

If $S_{KNN}$ is affected by $e$, it means that a subtask in $S_{KNN}$ is updated by $e$. We denote the replaced subtask in original $S_{KNN}$ as $e'$, and the updated interpolation distance as $ I^{S \cup \left\{ e \right\}} = I^S - I^{\left\{ e' \right\}} + I^{ \left\{ e \right\} }$.
%We get that $ I^{S \cup \left\{ e \right\}} = I^{S - \left\{ e' \right\}} + I^{ \left\{ e \right\} }$ and $I^S = I^{S - \left\{ e' \right\}} + I^{ \left\{ e' \right\} }$.
Then, we can get $\rho_{err}^{S \cup \left\{ e \right\}} - \rho_{err}^S  = \frac{I^{S \cup \left\{ e \right\}}}{km} - \frac{I^S}{km}  =  \frac{I^{ \left\{ e \right\} }}{km} - \frac{I^{ \left\{ e' \right\} }}{km} $.
More, the fact that $e'$ is replaced by $e$ implies that $I^{ \left\{ e \right\} } < I^{ \left\{ e' \right\} }$. So, Equation \ref{pf:non_increase} holds.

In summary, Equation \ref{pf:non_increase} holds in all possible cases. Hence, the lemma is proved.
\end{proof}
}
%%\vspace{-10pt}
\subsection{Proof of Lemma~\ref{lem:cond1}}
\label{subsec:cond1}

{%\small
\begin{lemma}
\label{lem:cond1}
For a time segment $[l, r]$, if $knn(l)=knn(r)$, it is true that $\forall e \in [l, r], knn(e) = knn(l) = knn(r)$.
\end{lemma}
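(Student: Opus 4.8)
The plan is to reduce the claim to the one-dimensional geometry of the $k$-nearest-neighbor assignment and show that the order-$k$ Voronoi cells on the timeline are intervals. Concretely, I would fix the current set of executed subtasks, list their time positions in increasing order as $x_1 < x_2 < \cdots < x_N$ (the case $N < k$ is degenerate: the footnote convention makes $knn(\cdot)$ constant, so the lemma holds trivially), and regard $knn(e)$ as a function of the query slot $e$. The lemma then says exactly that the cell $\{e : knn(e) = K\}$ is convex on the line, so once the two endpoints $l$ and $r$ agree, everything between them agrees.

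The first step is to establish that $knn(e)$ is always a block of $k$ \emph{consecutive} executed subtasks $\{x_s, x_{s+1}, \ldots, x_{s+k-1}\}$, so that it is fully determined by a single start index $s(e)$. This holds because a ``gap'' is impossible: if the $k$-NN set contained $x_a < x_b$ but omitted some $x_j$ with $x_a < x_j < x_b$, then $x_j$ lies between $x_a$ and $x_b$, whence $|e - x_j| \le \max(|e - x_a|, |e - x_b|)$; thus $x_j$ is at least as close to $e$ as the farthest member of the set, contradicting its exclusion. The second, and main, step is to show that the start index $s(e)$ is monotone non-decreasing in $e$. Comparing the two adjacent candidate windows $\{x_s,\ldots,x_{s+k-1}\}$ and $\{x_{s+1},\ldots,x_{s+k}\}$, which share $k-1$ sites and differ only in $x_s$ versus $x_{s+k}$, the left window is preferred precisely when $|e - x_s| < |e - x_{s+k}|$, i.e. when $e < (x_s + x_{s+k})/2$. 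Because both $x_s$ and $x_{s+k}$ increase with $s$, these threshold midpoints are strictly increasing, so as $e$ grows the preferred window only ever slides rightward and never back.

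Given these two facts the lemma follows immediately by sandwiching. Since $knn(l) = knn(r)$, both endpoints select the same window, so $s(l) = s(r) =: s_0$; and for any $e$ with $l \le e \le r$, monotonicity gives $s_0 = s(l) \le s(e) \le s(r) = s_0$, forcing $s(e) = s_0$ and hence $knn(e) = knn(l) = knn(r)$. I expect the only delicate points to be the tie cases --- when $e$ sits exactly on a threshold midpoint, or when fewer than $k$ executed subtasks exist --- which require fixing a consistent tie-breaking rule so that the cells genuinely partition the line; these conventions do not affect the monotonicity argument, which is the heart of the proof.
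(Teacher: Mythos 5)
Your proof is correct, but it takes a genuinely different route from the paper's. The paper argues by contradiction: assuming some $e \in [l,r]$ has a different nearest neighbor, it writes $a$ for the common nearest neighbor of $l$ and $r$ and $b$ for the allegedly closer site to $e$, then runs a case analysis on the position of $a$ (inside $[l,r]$, left of $l$, or right of $r$), showing in each case that $b$ would have to lie in an interval already known to contain no executed subtask. Notably, the paper's argument is written entirely in terms of the $1$-nearest neighbor ($NN$, ``$1$-$nn$''), so as written it only covers the case $k=1$ even though the lemma claims the statement for general $knn$; extending that case analysis to arbitrary $k$ would require tracking the $k$-th nearest distances of both endpoints and would be considerably messier. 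Your argument, by contrast, proves the general-$k$ statement directly: consecutiveness of $k$-NN blocks on the line (convexity of the distance function), plus monotonicity of the window's start index via the single-crossing thresholds $(x_s + x_{s+k})/2$, shows that every order-$k$ Voronoi cell is an interval, and the lemma follows by sandwiching. This is precisely the interval structure of order-$k$ cells that the paper's indexing scheme in Section III-C relies on, so your approach both covers the generality the lemma actually asserts and makes the underlying geometric fact explicit, at the cost of needing the (acknowledged) tie-breaking conventions; the paper's approach is more elementary and self-contained but narrower in scope.
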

%%\vspace{-10pt}
\begin{proof}
We prove that by contradiction. Assume a time segment $[l,r]$, $NN(l) = NN(r)$, and a slot $e$ on the segment, $e \in [l,r]$, $NN(l) \neq NN(e)$. We use $a$ and $b$ denote $NN(l)$ and $NN(e)$, $a \neq b$, $|b,e|_t <|a, e|_t$, and $b$ is in the range of $(e-|a,e|_t, e+|a,e|_t)$(It doesn't include two endpoints). According to the position of $a$, it can be divided into three cases:

The first case is for $l \leq a \leq r$. As $a$ is the $NN$ of $l$ and $r$, there is no other executed subtask within the time segment $[l-|l,a|_t, r+|r,a|_t]$. It has $e-|a,e|_t - (l-|l,a|_t) = |l,e|_t  -|a,e|_t + |l,a|_t \geq 0 $ and $r+|r,a|_t - (e+|a,e|_t) = |r,e|_t - |a,e|_t + |r, a|_t \geq 0 $, then the range of $b$ is within the range $[l-|l,a|_t, r+|r,a|_t]$. The case can not exist.

The second case is for $a < l$, and the range of $b$ is $(e, e+|a,e|_t)$. As $a$ is the $NN$ of $l$ and $r$, there is no other executed subtask within the time segment $(a, r + |a,r|_t)$, then the range of $b$ is within the range $(a, r + |a,r|_t)$. The case can not exist.

The third case is for $a > r$, then the range of $b$ is $(e-|a,e|_t, a)$. As $a$ is the $1-nn$ of $l$ and $r$, there is no other executed subtask within the time segment $(l - |a,l|_t, a)$, then the range of $b$ is within the range $(l - |a,l|_t, a)$. The case can not exist.
\end{proof}
}

\subsection{Extension to Spatiotemporal Interpolation}

%\small
{
{\it Spatiotemporal Interpolation.} Suppose a set of tasks $\mathcal{T}=\{\tau_1, \tau_2, ...\}$.
Each task $\tau_i$ consists of a set of $m$ subtasks, $\tau_i = \{\tau_i^{(j)}\}_{1 \leq j \leq m}$. If a subtask $\tau_i^{(j)}$ is not probed, it can either be {\it temporally interpolated} by the executed subtasks belonging to the same task $\tau_i$, or be spatially interpolated by subtasks satisfying that: 1) being executed at the same time slot $j$; 2) belonging to other tasks than $\{\tau_i\}$.
For example, in Fig.~\ref{fig:interpolation}, subtask $\tau_i^{(5)}$ can either be temporally interpolated by $\tau_i^{(3)}$ or spatially interpolated by $\tau_j^{(5)}$.

\begin{figure}
\centering
\includegraphics[width = 0.5\columnwidth]{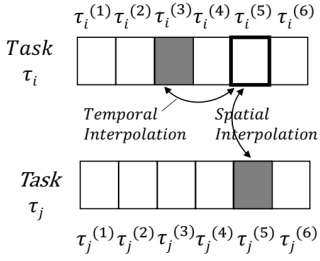}
\caption{Spatiotemporal Interpolation (Executed subtasks are shaded.)}
\label{fig:interpolation}
\vspace{-10pt}
\end{figure}

\begin{figure*}
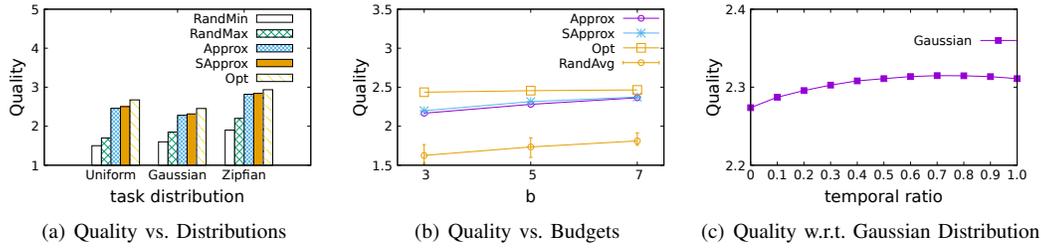

\center
\vspace{-10pt}
\subfigure[Quality vs. Distributions] {\includegraphics[width=1.8in]{fig/spatial-budget-5-ws-03-wt-07.pdf}}
\subfigure[Quality vs. Budgets] {\includegraphics[width=1.8in]{fig/spatial-gaussian-ws-03-wt-07.pdf}}
\subfigure[Quality w.r.t. Gaussian Distribution]
{\includegraphics[width=1.8in]{fig/spatial-ratio-gaussian.pdf}}
\vspace{-5pt}
\caption{Results with Spatiotemporal Interpolation}
\vspace{-10pt}
\label{ret:weighted}
\end{figure*}

{\it Extensions on Quality Metrics.} The spatial interpolation error is proportional to the spatial distances between the interpolated values and their neighboring values. The error ratio function $\rho^s_{err}(\tau_{i}^{(j)})$ for spatial interpolation can thus be written as follows.
\begin{align}
\vspace{-5pt}
\rho^s_{err}(\tau_i^{(j)}) = \frac{\sum_{e \in S^s_{kNN}(\tau_i^{(j)})} |\tau_i^{(j)},e|_e}{k\cdot |\mathcal{D}|}
\vspace{-5pt}
\end{align}
Here, $e$ represent an executed subtask, and $|\tau_i^{(j)},e|_e$ represents the spatial distance
between subtask $\tau_i^{(j)}$ and $e$.
Function $S^s_{kNN}(.)$ returns the $k$ executed subtasks with the smallest spatial distances. $|\mathcal{D}|$ in the denominator represents the spatial domain size so that the value range of the spatial interpolation error ratio $\rho^s_{err}$ is from $0$ to $1$, to be consistent with the form of temporal interpolation error ratio function.

We can use a weighted summation function to combine the interpolation errors of both spatial and temporal domains.
\begin{align}
\label{eqn:ratio}
\rho_{err} = w_s \cdot \rho_{err}^s + w_t \cdot \rho_{err}^t
\end{align}
Here, $w_s$ and $w_t$ are weights of the two components, whose sum equals $1$. $\rho_{err}^t$ represents the temporal interpolation error (Equation 3 in the manuscript).
So, the subtask finishing probability $p_i^{(j)}$ can be written as:
\begin{align}
p_i^{(j)}= \frac{1}{m}(1-\rho_{err}(\tau_i^{(j)})) \nonumber\\
= \underbrace{\frac{1}{m}(1- w_t \cdot \rho^t_{err}(\tau_i^{t(j)}))}_{\text{The temporal interpolation part}} +
\underbrace{\frac{1}{m}(1- w_s \cdot \rho^s_{err}(\tau_i^{s(j)}))}_{\text{The spatial interpolation part}} - \frac{1}{m} \nonumber
\end{align}

Both temporal and spatial interpolation parts can be proved to be submodular and non-decreasing, following the proofs of Lemmas 6 and 7. So, the summation of the two parts preserves the submodularity and non-decreasingness, according to the properties of composite submodular functions (Lemma 1 in the manuscript).
Similarly, quality function
$q(\tau_i) = -\sum_{j=1}^{m}p_i^{(j)}\log_2\big{(} p_i^{(j)}\big{)}$
can be proved to be submodular and non-decreasing, since the entropy function is known as concave and non-decreasing.

{\it Extensions on Multiple Task Assignment.} Since the spatial interpolation process refers to the interactions between multiple TCSC tasks, we hereby examine the two variants of multi-task assignment scenarios, with aggregated quality metrics $q_{sum}$ and $q_{min}$ as the maximization targets, respectively.

\begin{problem} {\it Spatio-Temporal Continuous Crowdsourcing ({\bf STCC in short)}}
Given a set of tasks $\mathcal{T}=\{\tau_1, \tau_2, ...\}$, the problem is to find a task assignment for $\tau_i \in \mathcal{T}$, such that the summation quality $q_{sum}(\mathcal{T})= \sum_{i=1}^{|\mathcal{T}|}q(\tau_i \big{|} \tau_i \in \mathcal{T})$, or the minimum quality
$q_{min}(\mathcal{T})= min \left\{q(\tau_i) \big{|} \tau_i \in \mathcal{T} \right\}$ can be maximized with given budgets.
\end{problem}

In the settings of spatiotemporal interpolation, we can prove that the summation quality and minimum quality functions are still submodular and non-decreasing with Lemma 1, because 1) {\tt SUM} and {\tt MIN} are concave functions; 2) the extended quality function is proved to be submodular and non-decreasing. Therefore, the approximation framework of Algorithm 1 can be applied for handling the multi-task assignment scenario. The heuristic value is set as the increase of the quality metrics divided by the corresponding cost (of a tentatively selected subtask), following the same greedy strategy and approximation ratio.
To this end, the framework of approximation algorithm can be preserved.

{\it Experimental Results.}
We conduct experiments with the updated quality metric in Fig.~\ref{ret:weighted}, following the default setting of the manuscript. SApprox refers to the results with spatiotemporal interpolation and Approx refers to the results with only temporal interpolation. By default, we set $w_s$ and $w_t$ to $0.3$ and $0.7$, respectively, for SApprox. For Approx, the $w_s$ is set to $1$, since it does not do spatial interpolation.

Fig.~\ref{ret:weighted} (a) reports the quality values w.r.t. data distributions. It can be observed that both SApprox and Approx are very close to the optimal result, OPT. SApprox is better than Approx, because of the quality improvement made by spatial interpolation. We also test how the quality varies w.r.t. the budgets. In all testing, SApprox is better than Approx, and both the two have significant improvement over the baselines.
To examine the effect of parameter tuning on $w_s$ and $w_t$, we plot Fig.~\ref{ret:weighted} (c), where X-axis is for the value of $w_t$. It shows that when $w_t$ equals 0.7 the highest quality value is achieved. Therefore, in our experiments, $w_t$ is set to $0.7$, by default.

{\it Extensions on Indexing.}
To support efficient evaluation, current indexing techniques need to be redesigned.
For TCSC, the index structure is based on a one-dimensional Voronoi diagram.
For STCC, the index structure is based on a multi-dimensional weighted order-k Voronoi diagram. We should study how to approximate such a diagram with indexing structures, including node splitting and stopping conditions, index-based maximum heuristic value calculation, etc.
}
%\end{appendix}

%\balance

\end{document}